\DeclarePairedDelimiter\floor{\lfloor}{\rfloor}
\newtheorem{lemma}{Lemma}[section]
\newtheorem{theorem}{Theorem}[section]
\newtheorem{remark}{Remark}[section]
\DeclareMathOperator{\he}{He}
\DeclareMathOperator{\spn}{span}
\DeclareMathOperator{\sgn}{sgn}
\DeclareMathOperator{\erf}{erf}
\definecolor{newcolor}{rgb}{.8,.349,.1}
\begin{document}


\begin{frontmatter}

  \title{The Wigner Function of Ground State and One-Dimensional Numerics}%
  
  %
  \author[1]{Hongfei Zhan}
  \author[2]{Zhenning Cai\corref{cor1}}
  \cortext[cor1]{Corresponding author.
    E-mail address: matcz@nus.edu.sg}
  \author[1,3,4]{Guanghui Hu}
  %
  \address[1]{Department of Mathematics, Faculty of Science and Technology, University of Macau, Macao SAR, China}
  \address[2]{Department of Mathematics, National University of Singapore, 10 Lower Kent Ridge Road, Singapore}
  \address[3]{Zhuhai UM Science \& Technology Research Institute, Zhuhai, Guangdong, China}
  \address[4]{Guangdong-Hong Kong-Macao Joint Laboratory for Data-Driven Fluid Mechanics and Engineering Applications, University of Macau, Macau 999078, China}
  %

  \begin{abstract}
    In this paper, the ground state Wigner function of a many-body
    system is explored theoretically and numerically. First, an
    eigenvalue problem for Wigner function is derived based on the
    energy operator of the system. The validity of finding the ground
    state through solving this eigenvalue problem is obtained by
    building a correspondence between its solution and the solution of
    stationary Schr\"odinger equation. Then, a numerical method is
    designed for solving proposed eigenvalue problem in one
    dimensional case, which can be briefly described by i) a
    simplified model is derived based on a quantum hydrodynamic model
    [Z. Cai et al, J. Math. Chem., 2013] to reduce the dimension of
    the problem, ii) an imaginary time propagation method is designed
    for solving the model, and numerical techniques such as solution
    reconstruction are proposed for the feasibility of the
    method. Results of several numerical experiments verify our
    method, in which the potential application of the method for large
    scale system is demonstrated by examples with density functional
    theory.
  \end{abstract}

  \begin{keyword}
    Wigner function;
    Ground state;
    Imaginary time propagation;
    Density functional theory
  \end{keyword}

\end{frontmatter}

\section{Introduction}

The study of the ground state of a many-body quantum system plays an
important role in a variety of areas such as geometry optimization of
molecules, photon absorption spectra of atoms and molecules, and linear
response theory in the molecular dynamics.

The ground state can be obtained by solving the fundamental governing
equation, i.e., Schr\"odinger equation, in quantum mechanics. However,
due to the curse of dimensionality, direct
numerical study of Schr\"odinger equation via classical mesh-based approaches is intractable even for small molecule such
as methane. Hence, approximate solution of Schr\"odinger equation
has been a long-standing research topic, in which many pioneer works
have been done. For example, quantum Monte Carlo methods
\cite{acioli1997,yan2017} uses stochastic methods to evaluate
integrals arising in the many-body problems. Quantum Monte Carlo
method offers potential to describe directly many-body effect of a
quantum system. However, the efficiency of the method suffers
from its slow convergence in the simulations. By using a single Slater
determinant as an ansatz for the many-body wavefunction, the main task
of Hartree-Fock method\cite{echenique2007} is to solve a set of
equations derived from a variational method. With acceleration
techniques for the self-consistent field iteration and quality solver
for the generalized eigenvalue problem, the efficiency of Hartree-Fock
method becomes acceptable for large scale system from practical
problems, which makes the method very popular in the quantum
computational chemistry community even nowadays. However, the lack of
electron correlation would introduce large deviations from the
experimental results, which limits the application of the Hartree-Fock
method. Density functional theory\cite{kohn1998,medvedev2017} combines
the advantages from above two
methods. Theoretically\cite{Hohenberg-Kohn}, it has been proved that
three dimensional ground state electron density is a fundamental
quantity in a given many-body system, and both the electron exchange
and the electron correlation are described in derived Kohn-Sham
model\cite{Kohn-Sham}. Numerically, the techniques developed for
Hartree-Fock method can be borrowed for solving Kohn-Sham model. Even
better, with the application of local basis functions for the
wavefunction\cite{BaoHuLiu2012, lin2012}, the numerical efficiency can
potentially be further improved by using fast solvers for sparse
system.

Besides the conventional Schr\"odinger wave function in Hilbert space,
the Wigner phase-space quasi-distribution function \cite{wigner1932}
provides an equivalent approach to describe quantum object that bears
a close analogy to classical mechanics \cite{zachos2002}. Moreover, the intriguing mathematical structure of the Weyl-Wigner correspondence has also been employed in some advanced topics, such as the
deformation quantization\cite{zachos2014}. The Wigner
formalism has been applied to a variety of situations ranging from
atomic physics \cite{vacchini2007} to quantum electronic transport
\cite{schwaha2013,weinbub2018} and many-body quantum
  systems\cite{sellier2015Introduction}. Furthermore, both pure and
mixed states of a quantum system can be handled in a unified approach
by Wigner function\cite{william2008}. All theoretical advantages motivate the
research on developing models and numerical methods for finding the
Wigner functions of a quantum system. 

Different from the situation for Schr\"odinger equation that there
have been lots of mature approximate models and numerical methods,
more efforts are needed towards the Wigner functions.  The first
attempts to simulate quantum phenomena by Wigner function were
\cite{frensley1987,frensley1990} for one-dimensional one-body
case. Recently, several methods were designed for the simulation based
on Wigner function, such as cell average spectral element
method\cite{shao2011}, moment
method\cite{li2014Hyperbolic,furtmaier2016},
WENO-solver\cite{dorda2015}, Gaussian beam method\cite{yin2013}, etc. While there were also various
stochastic methods, e.g., signed particle Wigner Monte Carlo
method\cite{nedjakov2004,nedjalkov2013,sellier2014Benchmark} and path
integral method\cite{larkin2016,larkin2017,amartya2019}. In many-body
situation, the Wigner based simulation was achieved by
advective-spectral-mixed method\cite{shao2016AdvectiveSpectralMixed},
Monte Carlo
method\cite{sellier2014ManyBody,sellier2015Fermion,sellier2016Full}
and the method based on branching random walk\cite{shao2020}.  It is
known that in a dynamic study of a given system, an initial state of
the system should be specified, which is the ground state of the
system in most cases. It is noted that although there have been works
mentioned above for dynamics of a given quantum system, the work
towards the Wigner functions of the ground state is rare. To our best
knowledge, only \cite{sellier2014} proposed a feasible framework to
handle both time-dependent and time-independent problem based on Monte
Carlo method, while no result on the deterministic method for a many
body quantum system can be found from the literature, even for the
simplest one-dimensional two-body case.

In this paper, in the category of deterministic approach, with the aid of density functional theory, the ground
state Wigner function is explored both theoretically and numerically
for a given many-body quantum system. 
More specifically, we firstly
derive an eigenvalue problem of energy operator for Wigner function
based on the stationary Wigner equation. Then the correspondence
between Wigner eigenfunction and Schr\"odinger eigenfunction in the sense of construction is deduced to guarantee the
validity of calculation of the ground state Wigner function through
solving this eigenvalue problem. Focusing on the one-dimensional case,
a numerical method based on imaginary time propagation
method \cite{chin2009,lehtovaara2007,philipp2013} is designed for the solution of the proposed
eigenvalue problem. A quantum hydrodynamic model proposed in \cite{cai2013}
is simplified in our work to reduce the dimension of the problem,
while a reconstruction method is proposed to resolve the
well-posedness issue introduced by truncating the approximation. Four
examples are tested to show the effectiveness of our method. The
numerical convergence of the method can be observed clearly in all
numerical experiments. Furthermore, the capability of the numerical
method on calculating the excited states of the system, and on
handling the case with singular potential, is also demonstrated
successfully in the harmonic oscillator and the hydrogen examples,
respectively. More importantly, the potential of our method for the
ground state calculation of large-scale systems is also shown
obviously in the last two examples with effective potentials in
density functional theory. It is worth mentioning that compared with
the Monte Carlo approach, our method is more robust in the sense that
random initial guess of the Wigner function is adopted in our
simulation. It is known that a general issue of Monte Carlo method is
its slow convergence, while our method successfully demonstrates the
theoretical convergence rate of linear finite element method. The
method can be extended to higher-order cases in a natural way.

The rest of this paper is organized as follows: In Section
\ref{sec-wigner-background} we briefly introduce Wigner function
and stationary Wigner equation. The derivation of the
eigenvalue problem and the correspondence between Wigner eigenfunction and Schrodinger eigenfunction are
demonstrated in Section \ref{sec:eig}. In Section
\ref{sec:hermite expansion}, the discretization along $\mathbf{p}$ direction is provided. One-dimensional numerics based on imaginary time propagation method is considered in Section \ref{sec:itp}. Four numerical examples are presented in
Section \ref{sec:numerical result}. In Section \ref{sec:conclusions}
we conclude this paper and discuss the direction of future work. The
conversion between wave function and the coefficient functions and
related numerical benefit are illustrated in appendix A.

For convenience, we only consider the Hartree atomic units $\hbar=m=e=1$, where
$\hbar $ is the reduced Planck constant, $m$ is the
effective mass of electron, and $e$ is the positive electron
charge. The range of appearing integrals is from $-\infty$ to
$\infty$ without extra explanation.

\section{Wigner function and stationary Wigner equation}
\label{sec-wigner-background}

In this section, the Wigner formalism and the widely used stationary Wigner equation will be introduced briefly.

We define the density matrix
\begin{equation}
  \rho(\mathbf{x},\mathbf{x}')=\sum\limits_jP_j\psi_j(\mathbf{x})\psi_j^*(\mathbf{x}'),
\end{equation}
where $\psi_j$ is the $j$-th eigenfunction of the time-independent
Schr\"odinger equation
\begin{equation}\label{eqn:psi-time-independent-schrodinger}
  H_s\psi_j(\mathbf{x})
  =\left[-\frac{\nabla^2}{2}+V(\mathbf{x})\right]\psi_j(\mathbf{x})
  =E_j\psi_j(\mathbf{x}),
\end{equation}
and $P_j$ is the probability to find the $j$-th eigenstate. The Wigner
function $f(\mathbf{x},\mathbf{p})$ in the $2D$-dimensional phase space $(\mathbf{x},\mathbf{p})\in\mathbb{R}^{2D}$ is defined by
applying Wigner-Weyl transform to the density matrix
\begin{equation}\label{eqn:wigner-definition}
  f(\mathbf{x},\mathbf{p})=\frac{1}{(2\pi)^D}\int\rho\left(\mathbf{x}+\frac{\mathbf{y}}{2},\mathbf{x}-\frac{\mathbf{y}}{2}\right)\exp(-i\mathbf{p}\cdot\mathbf{y})d\mathbf{y}.
\end{equation}
By taking integral of
(\ref{eqn:wigner-definition}) with respect to $\mathbf{p}$, we get the particle density
\begin{equation}
  \rho(\mathbf{x})= \rho(\mathbf{x}, \mathbf{x}) = \int f(\mathbf{x},\mathbf{p})d\mathbf{p}.
\end{equation}
Following the basic property of the Weyl transform, the energy in the Wigner formalism can be expressed by
\begin{equation}
  E=\iint \left[\frac{|\mathbf{p}|^2}{2}+V(\mathbf{x}) \right]f(\mathbf{x},\mathbf{p})d\mathbf{x}d\mathbf{p}.
\end{equation}
Specially, if the density matrix corresponds to a pure state $e^{iA}\psi(\mathbf{x})$, where $\psi(\mathbf{x})$ is a real-valued function, with the fact that Wigner function is also a real-valued function, we can derive that
\begin{equation}
  \begin{aligned}
    f(\mathbf{x},-\mathbf{p})
    &=f(\mathbf{x},-\mathbf{p})^*\\
    &=\frac{1}{(2\pi)^D}\int\psi\left(\mathbf{x}+\frac{-\mathbf{y}}{2}\right)\psi^*\left(\mathbf{x}-\frac{-\mathbf{y}}{2}\right)\exp(-i\mathbf{p}\cdot(-\mathbf{y}))d\mathbf{y}
    =f(\mathbf{x},\mathbf{p}).
  \end{aligned}
\end{equation}
Thus the Wigner function of any pure state with a constant phase factor is even with respect to $\mathbf{p}$.

Following the method in \cite{hillery1984}, we can deduce  the stationary Wigner equation from the time-independent Schr\"odinger equation
\begin{equation}\label{eqn:stationary-wigner-eqn}
  \mathbf{p}\cdot\nabla_\mathbf{x}f(\mathbf{x},\mathbf{p})
  +(\Theta[V]f )(\mathbf{x},\mathbf{p})
  =0,
\end{equation}
where $\Theta[V]f $ is a non-local pseudo-differential operator defined by
\begin{equation}\label{eqn:pseudo-differential-operator}
  (\Theta[V]f )(\mathbf{x},\mathbf{p})
  =\int V_{w}(\mathbf{x},\mathbf{p}')f(\mathbf{x},\mathbf{p}-\mathbf{p}')d\mathbf{p}',
\end{equation}
and the Wigner potential reads
\begin{equation}\label{eqn:pseudo-differential-operator-Vw}
  V_w(\mathbf{x},\mathbf{p})
  =\frac{i}{(2\pi)^D}\int \left[V\left(\mathbf{x}+\frac{\mathbf{y}}{2}\right)-V\left(\mathbf{x}-\frac{\mathbf{y}}{2}\right)\right]\exp(-i\mathbf{p}\cdot \mathbf{y})d\mathbf{y}.
\end{equation}
If the potential $V\in C^\omega({\mathbb{R}^D})$
, (\ref{eqn:pseudo-differential-operator}) can be locally expressed by means of Taylor expansion
\begin{equation}\label{eqn:pseudo-differential-taylor}
  (\Theta[V]f )(\mathbf{x},\mathbf{p})
  =-\sum\limits_{\lambda,|\lambda|\text{odd}}\frac{1}{\lambda!(2i)^{|\lambda|-1}}\frac{\partial^\lambda V}{\partial \mathbf{x}^\lambda}\frac{\partial^\lambda}{\partial \mathbf{p}^\lambda}f(\mathbf{x},\mathbf{p}),
\end{equation}
where $\mathbf{\lambda}$ is a $D$-dimensional multi-index, $\mathbf{\lambda}!=\prod_{j=1}^D\lambda_j!$, $\mathbf{x}^\mathbf{\lambda}=\prod_{j=1}^Dx_j^{\lambda_j}$, and
\begin{equation}
  \frac{\partial^\mathbf{\lambda}}{\partial \mathbf{x}^\mathbf{\lambda}}
  =\prod\limits_{j=1}^D\frac{\partial^{\lambda_j}}{\partial x_j^{\lambda_j}},
  ~~~
  \frac{\partial^\mathbf{\lambda}}{\partial \mathbf{p}^\mathbf{\lambda}}
  =\prod\limits_{j=1}^D\frac{\partial^{\lambda_j}}{\partial p_j^{\lambda_j}}.
\end{equation}
We close this section by the following theorem, which illustrates the effect of stationary Wigner equation from the perspective of Fourier transform:
\begin{theorem}\label{thm:stationary wigner eqn-pure state}
	If the eigenenergy of Schr\"odinger Hamiltonian is nondegenerate, the Wigner function $f(\mathbf{x},\mathbf{p})$ satisfying (\ref{eqn:stationary-wigner-eqn}) can be expressed by linear combination of the ones corresponding to pure state.
\end{theorem}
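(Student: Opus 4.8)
The plan is to leave phase space, pass to the density matrix, and exploit nondegeneracy. Concretely, assuming $f\in L^2(\mathbb{R}^{2D})$, I would invert the Wigner--Weyl transform (\ref{eqn:wigner-definition}) and set
\[
  \rho(\mathbf{x}_1,\mathbf{x}_2)=\int f\!\left(\frac{\mathbf{x}_1+\mathbf{x}_2}{2},\mathbf{p}\right)\exp\!\big(i\mathbf{p}\cdot(\mathbf{x}_1-\mathbf{x}_2)\big)\,d\mathbf{p},
\]
which is the ``Fourier transform perspective'' referred to in the statement and, by Plancherel, again lies in $L^2$. The first task is to turn (\ref{eqn:stationary-wigner-eqn}) into an equation for $\rho$: differentiating under the integral sign gives $\widetilde{\mathbf{p}\cdot\nabla_{\mathbf{x}}f}=\tfrac{1}{2i}\big(\nabla_{\mathbf{x}_1}^2-\nabla_{\mathbf{x}_2}^2\big)\rho$, and since the $\mathbf{p}$-convolution defining $\Theta[V]f$ in (\ref{eqn:pseudo-differential-operator}) becomes a pointwise product under this partial Fourier transform, (\ref{eqn:pseudo-differential-operator-Vw}) gives $\widetilde{\Theta[V]f}=i\big(V(\mathbf{x}_1)-V(\mathbf{x}_2)\big)\rho$, where $\widetilde{(\cdot)}$ denotes the same partial Fourier transform. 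Hence (\ref{eqn:stationary-wigner-eqn}) is equivalent to
\[
  \Big[-\tfrac12\nabla_{\mathbf{x}_1}^2+V(\mathbf{x}_1)\Big]\rho(\mathbf{x}_1,\mathbf{x}_2)=\Big[-\tfrac12\nabla_{\mathbf{x}_2}^2+V(\mathbf{x}_2)\Big]\rho(\mathbf{x}_1,\mathbf{x}_2),
\]
i.e.\ $H_s$ applied in the first argument coincides with $H_s$ applied in the second --- equivalently, the operator with kernel $\rho$ commutes with $H_s$.

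Next I would diagonalize $\rho$. Expanding it in the orthonormal basis $\{\psi_j(\mathbf{x}_1)\psi_k^*(\mathbf{x}_2)\}$ of $L^2(\mathbb{R}^{2D})$ --- using that the eigenfunctions $\psi_j$ are complete, i.e.\ that $H_s$ has purely discrete spectrum --- write $\rho(\mathbf{x}_1,\mathbf{x}_2)=\sum_{j,k}c_{jk}\,\psi_j(\mathbf{x}_1)\psi_k^*(\mathbf{x}_2)$. Because $V$ is real we have $H_s\psi_k^*=E_k\psi_k^*$, so the equation above forces $\sum_{j,k}c_{jk}(E_j-E_k)\psi_j(\mathbf{x}_1)\psi_k^*(\mathbf{x}_2)=0$, whence $c_{jk}(E_j-E_k)=0$ for all $j,k$. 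Nondegeneracy means $E_j\neq E_k$ whenever $j\neq k$, so the off-diagonal coefficients vanish and $\rho(\mathbf{x}_1,\mathbf{x}_2)=\sum_j c_{jj}\,\psi_j(\mathbf{x}_1)\psi_j^*(\mathbf{x}_2)$. Applying the linear transform (\ref{eqn:wigner-definition}) to this identity then gives $f(\mathbf{x},\mathbf{p})=\sum_j c_{jj}f_j(\mathbf{x},\mathbf{p})$, where $f_j$ is exactly the Wigner function of the pure eigenstate $\psi_j$ (which, as observed above, does not depend on a constant phase); since $f$ and the $f_j$ are real and the $f_j$ are linearly independent, the $c_{jj}$ are real. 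That is the claimed decomposition.

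The change-of-variables and Fourier bookkeeping in the first step is routine. The part I expect to be the main obstacle is the functional-analytic setting: fixing the admissible class of $f$, justifying differentiation under the integral sign and the termwise action of $H_s$ on the eigenfunction expansion, and making precise the tacit hypothesis that $H_s$ has purely discrete spectrum (so that ``nondegenerate'' is meaningful and the $\psi_j$ form a complete system). If a continuous spectrum were present one would replace the sum by a spectral integral and show that its off-diagonal part vanishes, which is where the real work would go.
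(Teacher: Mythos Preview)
Your proposal is correct and follows essentially the same route as the paper: both invert the partial Fourier transform in $\mathbf{p}$ to pass to $\tilde f$ (your $\rho$), rewrite the stationary Wigner equation as $[H_s(\mathbf{x}_1)-H_s(\mathbf{x}_2)]\tilde f=0$ after the change of variables, and then use nondegeneracy to kill the off-diagonal coefficients in the eigenfunction expansion. The paper phrases the last step as an inner-product computation $(\tilde f,\psi_i^*\psi_j)=\delta_{ij}(\tilde f,\psi_i^*\psi_i)$ rather than a direct series expansion, but this is the same argument; your version is in fact more explicit about the functional-analytic hypotheses (discrete spectrum, completeness, $L^2$ setting) that the paper leaves tacit.
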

\begin{proof}
	We rewrite Wigner function as follows
	\begin{equation}\label{eqn:definition-tildef}
		f(\mathbf{x},\mathbf{p})=\frac{1}{(2\pi)^D}\int\tilde{f}(\mathbf{x},\mathbf{y})\text{e}^{-i\mathbf{p}\cdot\mathbf{y}}d\mathbf{y}.
	\end{equation}
	Introducing change of variables $\mathbf{u}=\mathbf{x}+\mathbf{y}/2$, $\mathbf{v}=\mathbf{x}-\mathbf{y}/2$, it follows (\ref{eqn:stationary-wigner-eqn}) that
	\begin{equation}
		\left[H_s(\mathbf{u})-H_s(\mathbf{v})\right]\tilde{f}=0,
		~~~\text{where}~~~
		H_s(\mathbf{u})=-\frac{1}{2}\nabla_\mathbf{u}^2+V(\mathbf{u}).
	\end{equation}
	Consequently, in nondegenerate case we have
	Following the equation above we find
	\begin{equation*}
	\begin{aligned}
	  \left(\tilde{f},\psi_i^*(\mathbf{u})\psi_j(\mathbf{v})\right)
	    &=\frac{1}{E_i}\left(\tilde{f},H_s(\mathbf{u})\psi_i^*(\mathbf{u})\psi_j(\mathbf{v})\right)
	    =\frac{1}{E_i}\left(H_s(\mathbf{u})\tilde{f},\psi_i^*(\mathbf{u})\psi_j(\mathbf{v})\right)\\
	    &=\frac{1}{E_i}\left(H_s(\mathbf{v})\tilde{f},\psi_i^*(\mathbf{u})\psi_j(\mathbf{v})\right)
	    =\frac{1}{E_i}\left(\tilde{f},\psi_i^*(\mathbf{u})H_s(\mathbf{v})\psi_j(\mathbf{v})\right)
	    =\frac{E_j}{E_i}\left(\tilde{f},\psi_i^*(\mathbf{u})\psi_j(\mathbf{v})\right)
	    =\delta_{ij}\left(\tilde{f},\psi_i^*(\mathbf{u})\psi_i(\mathbf{v})\right),
	\end{aligned}
	\end{equation*}
    where $\delta_{ij}$ is the Kronecker delta symbol.
\end{proof}

\section{Eigenvalue problem}
\label{sec:eig}
In order to find the Wigner function of ground state, the Wigner analogy of the eigenvalue problem with respect to energy operator is needed. Starting from the eigenvalue problem proposed in \cite{hillery1984}, Lemma \ref{lem:1} is deduced for deriving Wigner eigenvalue problem. Finally, to guarantee the validity of the ground state calculation based on Wigner function, the correspondence between Schr\"odinger eigenfunction and Wigner eigenfunction is established in Theorem \ref{thm:1}.

\begin{lemma}\label{lem:1}
  Let $A(\mathbf{x},\mathbf{p})=F(\mathbf{x})$ be the Weyl transform of the operator $\hat{A}=A(\hat{\mathbf{x}},\hat{\mathbf{p}})=\hat{F}=F(\hat{\mathbf{x}})$, the eigenvalue problem of Wigner function with respect to $A$ is
  \begin{equation}\label{eqn:lemma1-result1}
    \frac{1}{\pi^D}\iint F(\mathbf{v})f(\mathbf{x},\mathbf{r})\text{e}^{2i(\mathbf{v}-\mathbf{x})\cdot(\mathbf{r}-\mathbf{p})}d\mathbf{v}d\mathbf{r}
    =\lambda f(\mathbf{x},\mathbf{p}).
  \end{equation}
  If $F$ is a polynomial, then (\ref{eqn:lemma1-result1}) can be further simplified as
  \begin{equation}\label{eqn:lemma1-result2}
    F\left(\frac{i}{2}\frac{\partial}{\partial\mathbf{p}}+\mathbf{x}\right)f(\mathbf{x},\mathbf{p})
    =\lambda f(\mathbf{x},\mathbf{p}).
  \end{equation}
  For $A(\mathbf{x},\mathbf{p})=G(\mathbf{p})$, similarly, we have the corresponding eigenvalue problem
  \begin{equation}\label{eqn:lemma1-result3}
    \frac{1}{\pi^D}\iint G(\mathbf{r})f(\mathbf{v},\mathbf{p})\text{e}^{-2i(\mathbf{v}-\mathbf{x})\cdot(\mathbf{r}-\mathbf{p})}d\mathbf{v}d\mathbf{r}
    =\lambda f(\mathbf{x},\mathbf{p}),
  \end{equation}
  and if $G$ is polynomial, we have
  \begin{equation}\label{eqn:lemma1-result4}
    G\left(-\frac{i}{2}\frac{\partial}{\partial\mathbf{x}}+\mathbf{p}\right)f(\mathbf{x},\mathbf{p})
    =\lambda f(\mathbf{x},\mathbf{p}).
  \end{equation}
\end{lemma}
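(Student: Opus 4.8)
The plan is to obtain all four identities as specializations of a single phase-space equation: if $\hat{A}$ has Weyl symbol $A$, the operator eigenvalue equation $\hat{A}\hat{\rho}=\lambda\hat{\rho}$ for the density operator $\hat{\rho}$ of an eigenstate of $\hat{A}$ transcribes, under the Wigner--Weyl correspondence, into the $\star$-genvalue equation $(A\star f)(\mathbf{x},\mathbf{p})=\lambda f(\mathbf{x},\mathbf{p})$ --- the eigenvalue problem of \cite{hillery1984} we start from --- whose integral form, with $\hbar=1$ in $D$ dimensions, reads
\begin{equation*}
  (A\star f)(\mathbf{x},\mathbf{p})
  =\frac{1}{\pi^{2D}}\iiiint
  A(\mathbf{v}_1,\mathbf{r}_1)\,f(\mathbf{v}_2,\mathbf{r}_2)\,
  \exp\!\big(2i[(\mathbf{v}_1-\mathbf{x})\cdot(\mathbf{r}_2-\mathbf{p})-(\mathbf{v}_2-\mathbf{x})\cdot(\mathbf{r}_1-\mathbf{p})]\big)\,
  d\mathbf{v}_1\,d\mathbf{r}_1\,d\mathbf{v}_2\,d\mathbf{r}_2 .
\end{equation*}
The only elementary tools used below are the affine change of variables $\mathbf{v}\mapsto\mathbf{v}-\mathbf{x}$, $\mathbf{r}\mapsto\mathbf{r}-\mathbf{p}$ and the Fourier identity $\int\mathrm{e}^{\pm 2i\,\mathbf{a}\cdot\mathbf{b}}\,d\mathbf{a}=\pi^{D}\delta(\mathbf{b})$, read distributionally (or obtained by inserting a Gaussian cutoff and passing to the limit).

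First I would derive the integral forms. For (\ref{eqn:lemma1-result1}) set $A(\mathbf{v}_1,\mathbf{r}_1)=F(\mathbf{v}_1)$: the integrand is then $\mathbf{r}_1$-independent, so the $\mathbf{r}_1$-integration yields $\pi^{D}\delta(\mathbf{v}_2-\mathbf{x})$, the $\mathbf{v}_2$-integration collapses the quadruple integral, and after simplifying $\pi^{-2D}\pi^{D}=\pi^{-D}$ and renaming $\mathbf{v}_1\to\mathbf{v}$, $\mathbf{r}_2\to\mathbf{r}$ one lands exactly on (\ref{eqn:lemma1-result1}). Symmetrically, for (\ref{eqn:lemma1-result3}) set $A(\mathbf{v}_1,\mathbf{r}_1)=G(\mathbf{r}_1)$: now the $\mathbf{v}_1$-integration produces $\pi^{D}\delta(\mathbf{r}_2-\mathbf{p})$, the $\mathbf{r}_2$-integration collapses the expression, and the surviving phase is precisely the $\mathrm{e}^{-2i(\mathbf{v}-\mathbf{x})\cdot(\mathbf{r}-\mathbf{p})}$ appearing in (\ref{eqn:lemma1-result3}).

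To pass to the polynomial forms, start from (\ref{eqn:lemma1-result1}), substitute $\mathbf{v}=\mathbf{x}+\mathbf{s}$, $\mathbf{r}=\mathbf{p}+\mathbf{t}$ so the left side is $\pi^{-D}\iint F(\mathbf{x}+\mathbf{s})\,f(\mathbf{x},\mathbf{p}+\mathbf{t})\,\mathrm{e}^{2i\mathbf{s}\cdot\mathbf{t}}\,d\mathbf{s}\,d\mathbf{t}$, and use that the Taylor expansion $F(\mathbf{x}+\mathbf{s})=\sum_{\lambda}\frac{1}{\lambda!}\frac{\partial^{\lambda}F}{\partial\mathbf{x}^{\lambda}}(\mathbf{x})\,\mathbf{s}^{\lambda}$ is finite when $F$ is a polynomial. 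Replacing $\mathbf{s}^{\lambda}\mathrm{e}^{2i\mathbf{s}\cdot\mathbf{t}}$ by $(2i)^{-|\lambda|}\partial_{\mathbf{t}}^{\lambda}\mathrm{e}^{2i\mathbf{s}\cdot\mathbf{t}}$, integrating by parts $|\lambda|$ times in $\mathbf{t}$ (no boundary terms for Schwartz-class $f$) so the derivative becomes $\partial_{\mathbf{p}}^{\lambda}$ acting on $f(\mathbf{x},\mathbf{p}+\mathbf{t})$, and then using $\int\mathrm{e}^{2i\mathbf{s}\cdot\mathbf{t}}\,d\mathbf{s}=\pi^{D}\delta(\mathbf{t})$ to remove the last integral, each multi-index picks up the constant $(-1)^{|\lambda|}(2i)^{-|\lambda|}=(i/2)^{|\lambda|}$; resumming yields $\sum_{\lambda}\frac{1}{\lambda!}\frac{\partial^{\lambda}F}{\partial\mathbf{x}^{\lambda}}(\mathbf{x})\big(\tfrac{i}{2}\big)^{|\lambda|}\frac{\partial^{\lambda}f}{\partial\mathbf{p}^{\lambda}}=F\big(\tfrac{i}{2}\tfrac{\partial}{\partial\mathbf{p}}+\mathbf{x}\big)f$, i.e. (\ref{eqn:lemma1-result2}). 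The derivation of (\ref{eqn:lemma1-result4}) from (\ref{eqn:lemma1-result3}) is the mirror computation: Taylor-expand $G(\mathbf{p}+\mathbf{t})$ in $\mathbf{t}$, write $\mathbf{t}^{\lambda}\mathrm{e}^{-2i\mathbf{s}\cdot\mathbf{t}}=(-2i)^{-|\lambda|}\partial_{\mathbf{s}}^{\lambda}\mathrm{e}^{-2i\mathbf{s}\cdot\mathbf{t}}$, integrate by parts in $\mathbf{s}$ so $\partial_{\mathbf{s}}^{\lambda}$ becomes $\partial_{\mathbf{x}}^{\lambda}$ on $f(\mathbf{x}+\mathbf{s},\mathbf{p})$, and use $\int\mathrm{e}^{-2i\mathbf{s}\cdot\mathbf{t}}\,d\mathbf{t}=\pi^{D}\delta(\mathbf{s})$; the per-index constant is now $(-1)^{|\lambda|}(-2i)^{-|\lambda|}=(-i/2)^{|\lambda|}$, which resums to $G\big(-\tfrac{i}{2}\tfrac{\partial}{\partial\mathbf{x}}+\mathbf{p}\big)f$.

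Everything after the first paragraph is routine once the substitutions are in place, so the multi-index algebra need not be spelled out in the write-up. The one genuinely delicate point is the opening step: pinning down that the phase-space version of the eigenvalue problem is $\hat{A}\hat{\rho}=\lambda\hat{\rho}$ transcribed as the displayed $\star$-genvalue equation, with exactly that sign and normalization convention in the oscillatory kernel --- once that is fixed, every later constant (the $\pi^{-D}$, the signs in the two exponentials, and the $\pm i/2$ in the Bopp shifts) is forced. The residual technical care --- treating the oscillatory integrals and $\delta$-functions distributionally and discarding boundary terms under the assumption that $f$ decays fast enough in $\mathbf{p}$ (resp.\ in $\mathbf{x}$) --- is standard in the Wigner--Weyl calculus.
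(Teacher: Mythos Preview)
Your proposal is correct and follows essentially the same route as the paper: the quadruple-integral $\star$-genvalue kernel you start from is, after an affine change of variables, exactly the Hillery et al.\ formula the paper quotes, and both arguments then specialize the symbol to $F(\mathbf{x})$ (resp.\ $G(\mathbf{p})$) and collapse two of the integrations via the Fourier $\delta$-identity to obtain \eqref{eqn:lemma1-result1} and \eqref{eqn:lemma1-result3}. Your passage to the polynomial Bopp-shift forms by Taylor expansion plus integration by parts is precisely what the paper means by ``can then be derived via integration by parts,'' only spelled out in greater detail.
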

In \cite{hillery1984}, the general eigenvalue problem for an operator $\hat{A} = A(\hat{\mathbf{x}},\hat{\mathbf{p}})$ has been formulated as
\begin{equation}
	\left(\frac{4}{\pi^2}\right)^D\iiiint A(\mathbf{y}+\mathbf{y}',\mathbf{q}-\mathbf{q}')f(\mathbf{y}-\mathbf{y}',\mathbf{q}+\mathbf{q}')\exp\{4i\mathbf{y}'\cdot(\mathbf{q}-\mathbf{p})+4i\mathbf{q}'\cdot(\mathbf{y}-\mathbf{x})\}d\mathbf{y}d\mathbf{y}'d\mathbf{q}d\mathbf{q}'
	=\lambda f(\mathbf{x},\mathbf{p}),
\end{equation}
from which \eqref{eqn:lemma1-result1} and \eqref{eqn:lemma1-result3} can be immediately obtained. The alternative forms \eqref{eqn:lemma1-result2} and \eqref{eqn:lemma1-result4} can then be derived via integration by parts. Now we consider the Weyl transform of the energy operator $H(\mathbf{x},\mathbf{p})=|\mathbf{p}|^2/2+V(\mathbf{x})$. By Lemma \ref{lem:1}, using (\ref{eqn:stationary-wigner-eqn}) to cancel the imaginary part we can define the Wigner analogy of the eigenvalue problem corresponding to energy operator
\begin{equation}\label{eqn:wigner-eigenvalue-problem}
  \begin{aligned}
    H_{\rm w}f(\mathbf{x},\mathbf{p})
    &=\frac{1}{2}\left(-\frac{1}{4}\nabla_\mathbf{x}^2f(\mathbf{x},\mathbf{p})+|\mathbf{p}|^2f(\mathbf{x},\mathbf{p})
    +\int V_{\rm eig}(\mathbf{x},\mathbf{p}')f(\mathbf{x},\mathbf{p}-\mathbf{p}')d\mathbf{p}'\right)
    =Ef(\mathbf{x},\mathbf{p}),
  \end{aligned}
\end{equation}
where
\begin{equation}\label{eqn:eig-Veig-def}
  V_{\rm eig}(\mathbf{x},\mathbf{p})
  =\frac{1}{(2\pi)^D}\int\left[V\left(\mathbf{x}+\frac{\mathbf{y}}{2}\right)+V\left(\mathbf{x}-\frac{\mathbf{y}}{2}\right)\right]\exp(-i\mathbf{p}\cdot \mathbf{y})d\mathbf{y}.
\end{equation}
If the potential $V$ is analytic on $\mathbb{R}^D$, the integral in (\ref{eqn:wigner-eigenvalue-problem}) has local expression similar to (\ref{eqn:pseudo-differential-taylor})
\begin{equation}\label{eqn:H-wigner-original-part2-taylor}
  \begin{aligned}
    \int V_{\rm eig}(\mathbf{x},\mathbf{p}')f(\mathbf{x},\mathbf{p}-\mathbf{p}')d\mathbf{p}'
    =\sum\limits_{\mathbf{\lambda},|\mathbf{\lambda}|\text{even}}
    \frac{2}{\mathbf{\lambda}!(2i)^{|\mathbf{\lambda}|}}\frac{\partial^\mathbf{\lambda} V}{\partial \mathbf{x}^\mathbf{\lambda}}
    \frac{\partial^\mathbf{\lambda}}{\partial \mathbf{p}^\mathbf{\lambda}}f(\mathbf{x},\mathbf{p}).
  \end{aligned}
\end{equation}
The eigenvalue problem (\ref{eqn:wigner-eigenvalue-problem}) and stationary Wigner equation (\ref{eqn:stationary-wigner-eqn}) are exactly the real and imaginary parts of the ``star-genvalue problem'' introduced in \cite{thomas1998}, respectively. It is proved in \cite{thomas1998} that the Wigner function satisfying star-genvalue problem corresponds to a pure state wave function.

\begin{remark}\label{rem:eigenfunction-Hw}
  Suppose the validity of separation of variables for the eigenfunctions of $H_w$. Following a similar discussion to Theorem \ref{thm:stationary wigner eqn-pure state}, we can derive that
  \begin{equation}
    \left[H_s(\mathbf{u})+H_s(\mathbf{v})\right]\tilde{f}=2E\tilde{f}.
  \end{equation}
  Using the method of separation of variables we find
  \begin{equation*}
      \frac{H_s(\mathbf{u})U}{U}+\frac{H_s(\mathbf{v})V}{V}=2E,
  \end{equation*}
  Therefore the eigenfunction has the form
  \begin{equation}
    f_{ij}(\mathbf{x},\mathbf{p})=\frac{1}{(2\pi)^D}\int\psi_i^*\left(\mathbf{x}+\frac{\mathbf{y}}{2}\right)\psi_j\left(\mathbf{x}-\frac{\mathbf{y}}{2}\right)\textrm{e}^{-i\mathbf{p}\cdot\mathbf{y}}d\mathbf{y}.
  \end{equation}
  with eigenvalue $E_{ij}=(E_i+E_j)/2$.
\end{remark}
\begin{remark}
  To distinguish the Wigner function corresponding to pure state from other eigenfunctions in Remark \ref{rem:eigenfunction-Hw}, we have following relation:
  \begin{equation}
    \iint f_{ij}(\mathbf{x},\mathbf{p})d\mathbf{x}d\mathbf{p}
    =\int\psi_i^*(\mathbf{x})\psi_j(\mathbf{y})d\mathbf{x}
    =\delta_{ij},
  \end{equation}
  where $\delta_{ij}$ is Dirac delta function.
\end{remark}
Furthermore, the correspondence between Schr\"odinger eigenfunction and Wigner eigenfunction can be derived as follows.
\begin{theorem}\label{thm:1}
  Let $\psi(\mathbf{x})$ be a Schr\"odinger eigenfunction with energy $E$, then there exists a Wigner eigenfunction $f(\mathbf{x},\mathbf{p})$ with the same eigenvalue $E$ satisfying the stationary Wigner equation. Conversely, if $f(\mathbf{x},\mathbf{p})$ is a Wigner function satisfying stationary Wigner equation and the eigenvalue problem, there exists a Schr\"odinger eigenfunction function with the same eigenvalue.
\end{theorem}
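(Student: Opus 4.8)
The plan is to reduce both implications to the pair of auxiliary identities
\begin{equation*}
  \left[H_s(\mathbf{u})-H_s(\mathbf{v})\right]\tilde{f}=0,
  \qquad
  \left[H_s(\mathbf{u})+H_s(\mathbf{v})\right]\tilde{f}=2E\tilde{f},
\end{equation*}
for the kernel $\tilde{f}$, where $H_s(\mathbf{u})=-\tfrac12\nabla_\mathbf{u}^2+V(\mathbf{u})$ and $\tilde{f}$ is obtained from $f$ through the representation (\ref{eqn:definition-tildef}) followed by the change of variables $\mathbf{u}=\mathbf{x}+\mathbf{y}/2$, $\mathbf{v}=\mathbf{x}-\mathbf{y}/2$ (so that, for a pure state $\psi$, $\tilde{f}(\mathbf{u},\mathbf{v})=\psi(\mathbf{u})\psi^*(\mathbf{v})$). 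The first identity encodes the stationary Wigner equation (\ref{eqn:stationary-wigner-eqn}) and the second encodes the eigenvalue problem (\ref{eqn:wigner-eigenvalue-problem}); this is exactly the translation already used in the proof of Theorem \ref{thm:stationary wigner eqn-pure state} and in Remark \ref{rem:eigenfunction-Hw}. Adding and subtracting, the two identities are equivalent to the simultaneous system $H_s(\mathbf{u})\tilde{f}=E\tilde{f}$ and $H_s(\mathbf{v})\tilde{f}=E\tilde{f}$, which is the form I will actually work with.

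For the forward implication, given $H_s\psi=E\psi$ I would take $f$ to be the Wigner transform of the pure state $\psi$ as in (\ref{eqn:wigner-definition}), so $\tilde{f}(\mathbf{u},\mathbf{v})=\psi(\mathbf{u})\psi^*(\mathbf{v})$. Since $V$ and $E$ are real, $H_s(\mathbf{u})\tilde{f}=E\tilde{f}$ and $H_s(\mathbf{v})\tilde{f}=E\tilde{f}$ hold by inspection, hence both auxiliary identities hold, and translating back shows that this $f$ solves the stationary Wigner equation (\ref{eqn:stationary-wigner-eqn}) and satisfies $H_w f=Ef$; it is precisely a diagonal member of the family $f_{ij}$ of Remark \ref{rem:eigenfunction-Hw}.

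For the converse, let $f\not\equiv0$ satisfy both (\ref{eqn:stationary-wigner-eqn}) and (\ref{eqn:wigner-eigenvalue-problem}) with eigenvalue $E$, and let $\tilde{f}$ be the associated kernel, which is (up to a constant) the inverse Fourier transform of $f$ in $\mathbf{p}$ and hence not identically zero. Translating the two equations yields the two auxiliary identities, so $H_s(\mathbf{u})\tilde{f}=E\tilde{f}$ and $H_s(\mathbf{v})\tilde{f}=E\tilde{f}$. Viewing $\tilde{f}$ as a function of $(\mathbf{u},\mathbf{v})$ and fixing $\mathbf{v}_0$ with $\tilde{f}(\cdot,\mathbf{v}_0)\not\equiv0$, the function $\psi:=\tilde{f}(\cdot,\mathbf{v}_0)$ is a Schr\"odinger eigenfunction with eigenvalue $E$. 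When $E$ is nondegenerate this additionally forces $\tilde{f}(\mathbf{u},\mathbf{v})=c\,\psi(\mathbf{u})\psi^*(\mathbf{v})$, so $f$ is the pure-state Wigner function of $\psi$, recovering the statement of \cite{thomas1998}; in the degenerate case $f$ is a combination of the $f_{ij}$ with $E_i=E_j=E$ and the argument still delivers one eigenfunction.

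The main obstacle I anticipate is not the algebra but the analytic justification of the translation step for a general, possibly non-analytic or singular potential $V$: one must argue directly from the integral definitions (\ref{eqn:pseudo-differential-operator})--(\ref{eqn:pseudo-differential-operator-Vw}) and (\ref{eqn:eig-Veig-def}) that the nonlocal operators $\mathbf{p}\cdot\nabla_\mathbf{x}+\Theta[V]$ and $H_w$ are intertwined by the change of variables with $H_s(\mathbf{u})-H_s(\mathbf{v})$ and $H_s(\mathbf{u})+H_s(\mathbf{v})$ respectively, rather than relying on the Taylor expansions (\ref{eqn:pseudo-differential-taylor}), (\ref{eqn:H-wigner-original-part2-taylor}). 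This needs enough decay and regularity of $f$ (equivalently of $\tilde{f}$) to legitimize the Fourier manipulations and the restriction to the slice $\mathbf{v}=\mathbf{v}_0$; I would carry the argument out in a suitable Schwartz-type class and note that the formal computation is the one already sketched in Theorem \ref{thm:stationary wigner eqn-pure state} and Remark \ref{rem:eigenfunction-Hw}.
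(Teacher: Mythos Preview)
Your proposal is correct and is essentially the same argument as the paper's, just made more explicit: the paper's inverse Wigner--Weyl transform \eqref{eqn:psi-inverse-wigner-transform}, $\psi(\mathbf{x})=A\int f\big(\tfrac{\mathbf{x}+\mathbf{x}_0}{2},\mathbf{p}\big)e^{i\mathbf{p}\cdot(\mathbf{x}-\mathbf{x}_0)}\,d\mathbf{p}$, is exactly your slice $\psi=\tilde{f}(\cdot,\mathbf{v}_0)$ once the change of variables is unwound, and the paper's ``direct calculation and substitution of \eqref{eqn:stationary-wigner-eqn}'' is the translation to the pair $H_s(\mathbf{u})\tilde{f}=E\tilde{f}$, $H_s(\mathbf{v})\tilde{f}=E\tilde{f}$ that you spell out.
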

\begin{proof}
  The derivation of the eigenvalue problem (\ref{eqn:wigner-eigenvalue-problem}) and stationary Wigner equation (\ref{eqn:stationary-wigner-eqn}) has clearly shown that the Wigner-Weyl transform of $\psi$ satisfies these two equations.
  
  On the contrary, let $f(\mathbf{x},\mathbf{p})$ be an eigenfunction of $H_{\rm w}$ with eigenvalue $E$, which satisfies (\ref{eqn:stationary-wigner-eqn}). Applying the inverse Wigner-Weyl transform we have
  \begin{equation}\label{eqn:psi-inverse-wigner-transform}
    \begin{aligned}
      \psi(\mathbf{x})
      =A\int f\left(\frac{\mathbf{x}+\mathbf{x}_0}{2},\mathbf{p}\right)\text{e}^{i\mathbf{p}\cdot(\mathbf{x}-\mathbf{x}_0)}d\mathbf{p},
    \end{aligned}
  \end{equation}
  where $A$ is the normalization constant. It can be verified by direct calculation and substitution of (\ref{eqn:stationary-wigner-eqn}) that
  \begin{equation}
    \left(-\frac{\nabla^2}{2}+V(\mathbf{x})\right)\psi
    =E\psi,
  \end{equation}
  which means such construction yields a Schr\"odinger eigenfunction with the same energy.
\end{proof}
The above theorem guarantees the 
correspondence between Schr\"odinger eigenfunction and Wigner eigenfunction in the sense of construction, which validates the ground state calculation based on Wigner function by solving the eigenvalue problem with the constraint of stationary Wigner equation. Furthermore, based on the construction in theorem \ref{thm:1}, the conversion between Schr\"odinger eigenfunction and Wigner eigenfunction expressed by coefficient functions is established in \ref{app:conversion}. Below we will focus on the Wigner formalism to develop our numerical schemes.

\section{Hermite expansion of Wigner function}
\label{sec:hermite expansion}
To deal with the global integral operators and reduce the dimension of problem, we consider the quantum hydrodynamic model proposed in \cite{cai2013}. It is noted that the hydrodynamic model in \cite{cai2013} is developed for time-dependent simulations, and is derived based on a series expansion of the Wigner function with respect to the momentum variable $\mathbf{p}$, where the basis functions vary for different positions and times. Here in the computation of ground states, we fix the bases and expand the Wigner function as
%
\begin{equation}\label{eqn:wigner-hermite-expansion}
  f(\mathbf{x},\mathbf{p})
  =\sum\limits_{\mathbf{\alpha}\in\mathbb{N}^D}f_\mathbf{\alpha}(\mathbf{x})\mathcal{H}_\mathbf{\alpha}(\mathbf{p}),
\end{equation}
where $\mathbf{\alpha}$ is a $D$-dimensional multi-index. The basis function $\mathcal{H}_\mathbf{\alpha}$ is defined as
\begin{equation}\label{eqn:hermite-expansion-basis-function}
  \mathcal{H}_\mathbf{\alpha}(\mathbf{p})\
  =\frac{1}{(2\pi)^{D/2}}\exp\left(-\frac{|\mathbf{p}|^2}{2}\right)\prod\limits_{j=1}^D\he_{\alpha_j}(p_j),
\end{equation}
where $\he_n(x)$ is the $n$-degree Hermite polynomial
\begin{equation}\label{eqn:hermite-polynomial-definition}
  \he_n(x)
  =(-1)^n\exp\left(\frac{x^2}{2}\right)\frac{d^n}{dx^n}\exp\left(-\frac{x^2}{2}\right)
  =n!\sum\limits_{m=0}^{\floor{\frac{n}{2}}}\frac{(-1)^m}{m!(n-2m)!}\frac{x^{n-2m}}{2^m}.
\end{equation}
To deduce the equations of coefficient functions corresponding to stationary Wigner equation and the eigenvalue problem, we need the following useful properties of Hermite polynomial \cite{handbook}:
\begin{enumerate}[1.]
\item Orthogonality: $\int \he_m(x)\he_n(x)\exp(-x^2/2)dx=m!\sqrt{2\pi}\delta_{m,n}$;
\item Recursion relation: $\he_{n+1}(x)=x\he_n(x)-n\he_{n-1}(x)$;
\item Differential relation: $\he_n'(x)=n\he_{n-1}(x)$.
\end{enumerate}
Combining the last two relations we find
\begin{equation}\label{eqn:relation-hermite}
  \left[\he_n(x)\exp(-x^2/2)\right]'
  =-\he_{n+1}(x)\exp(-x^2/2).
\end{equation}
Therefore
\begin{equation}\label{eqn:derivative-H_alpha}
  \frac{\partial}{\partial p_j}\mathcal{H}_\mathbf{\alpha}(\mathbf{p})
  =-\mathcal{H}_{\mathbf{\alpha}+\mathbf{e}_j}(\mathbf{p}).
\end{equation}
A direct result of the orthogonality of Hermite polynomials is
\begin{equation}\label{eqn:coefficient-function-calculation}
  f_\mathbf{\alpha}(\mathbf{x})
  =\frac{1}{\mathbf{\alpha}!}\int \left(\prod\limits_{j=1}^D\he_{\alpha_j}(p_j)\right)f(\mathbf{x},\mathbf{p})d\mathbf{p}.
\end{equation}
Due to the fact that the Wigner function of any pure state is even with respect to $\mathbf{p}$, in the situation of ground state calculation, it is reasonable to assume
\begin{equation}
  f_\mathbf{\alpha}(\mathbf{x})=0,
  ~~~\text{if }|\mathbf{\alpha}|\text{ is odd}.
\end{equation}
Finally, to avoid a system with infinite unknowns, we only consider $f_\mathbf{\alpha}(\mathbf{x})$ with $|\mathbf{\alpha}|\leqslant M$, where $M$ is an even positive number.

Particularly, for two Wigner functions $f_a(\mathbf{x},\mathbf{p})$ and $f_b(\mathbf{x},\mathbf{p})$ corresponding to two orthogonal eigenstates $\psi_a$ and $\psi_b$, respectively, following the property of Weyl transform, using integration by part we obtain (For detailed illustration, one can refer to \cite{william2008}) 
\begin{equation}\label{eqn:orthogonality}
  \begin{aligned}
    &(2\pi)^{-D}\sum\limits_{\mathbf{\alpha},\mathbf{\beta}\leqslant M}C_{\mathbf{\alpha}+\mathbf{\beta}}\int f_\mathbf{\alpha}^{(a)}(\mathbf{x})f_\mathbf{\beta}^{(b)}(\mathbf{x})d\mathbf{x}
    \approx\iint f_a(\mathbf{x},\mathbf{p})f_b(\mathbf{x},\mathbf{p})d\mathbf{x}d\mathbf{p}
    =(2\pi)^{-D}|\langle\psi_a|\psi_b\rangle|^2
    =0,
  \end{aligned}
\end{equation}
where $f_\mathbf{\alpha}^{(a)}(\mathbf{x})$ and $f_\mathbf{\beta}^{(b)}(\mathbf{x})$ are the coefficient functions of $f_a(\mathbf{x},\mathbf{p})$ and $f_b(\mathbf{x},\mathbf{p})$, respectively, and (For detailed derivation, see \ref{app:coefficient}.)
\begin{equation}\label{eqn:C_aph-int-H_aphH_bet}
  C_\mathbf{\alpha}
  =\int\left(\prod_{j=1}^D\he_{\alpha_j}(p_j)\right)\exp(-|\mathbf{p}|^2)d\mathbf{p}
  =\left\{\begin{array}{ll}
  \displaystyle\left(-1/4\right)^{|\mathbf{\alpha}|/2}\frac{\mathbf{\alpha}!}{(\mathbf{\alpha}/2)!}\pi^{D/2},
  &\text{if $\alpha_j$ is even for $1\leqslant j\leqslant D$},\\
  0,	&\text{otherwise}.
  \end{array}\right.
\end{equation}
Then we will use the properties mentioned above to derive governing equations of the coefficient functions corresponding to the stationary Wigner equation and the eigenvalue problem. The procedure follows the Petrov-Galerkin spectral method, which can also be regarded as taking moments on both sides of the equations.

\subsection{Stationary Wigner equation}
For convenience, we take $f_\mathbf{\alpha}=0$ for any multi-index $\mathbf{\alpha}$ with at least one negative component. Following the same method in \cite{cai2012}, we get the series expansion fo the first term in (\ref{eqn:stationary-wigner-eqn}) as
\begin{equation}\label{eqn:stationary-wigner-eqn-term1st-hermite}
  \mathbf{p}\cdot\nabla_\mathbf{x}f(\mathbf{x},\mathbf{p})
  =\sum\limits_\mathbf{\alpha}\sum\limits_{j=1}^D\left((\alpha_j+1)\frac{\partial f_{\mathbf{\alpha}+\mathbf{e}_j}}{\partial x_j}+\frac{\partial f_{\mathbf{\alpha}-\mathbf{e}_j}}{\partial x_j}\right)\mathcal{H}_\mathbf{\alpha}(\mathbf{p}).
\end{equation}
Using (\ref{eqn:relation-hermite}), the pseudo-differential operator term $\Theta[V]f $ with expression (\ref{eqn:pseudo-differential-taylor}) becomes
\begin{equation}\label{eqn:stationary-wigner-eqn-term2nd-hermite}
  (\Theta[V]f )(\mathbf{x},\mathbf{p})
  =\sum\limits_\mathbf{\alpha}\bigg{(}\sum\limits_{\mathbf{\lambda}\leqslant\mathbf{\alpha},|\mathbf{\lambda}|\text{odd}}\frac{1}{\mathbf{\lambda}!(2i)^{|\mathbf{\lambda}|-1}}\frac{\partial^\mathbf{\lambda} V}{\partial \mathbf{x}^\mathbf{\lambda}}
  f_{\mathbf{\alpha}-\mathbf{\lambda}}\bigg{)}\mathcal{H}_\mathbf{\alpha}(\mathbf{p}).
\end{equation}
Plugging (\ref{eqn:stationary-wigner-eqn-term1st-hermite}) and (\ref{eqn:stationary-wigner-eqn-term2nd-hermite}) into (\ref{eqn:stationary-wigner-eqn}) and equating the coefficient of each basis function to zero, we obtain
\begin{equation}\label{eqn:stationary-wigner-eqn-hermite}
  \sum\limits_{j=1}^D\left((\alpha_j+1)\frac{\partial f_{\mathbf{\alpha}+\mathbf{e}_j}}{\partial x_j}+\frac{\partial f_{\mathbf{\alpha}-\mathbf{e}_j}}{\partial x_j}\right)
  +\sum\limits_{\mathbf{\lambda}\leqslant\mathbf{\alpha},|\mathbf{\lambda}|\text{odd}}\frac{1}{\mathbf{\lambda}!(2i)^{|\mathbf{\lambda}|-1}}\frac{\partial^\mathbf{\lambda} V}{\partial \mathbf{x}^\mathbf{\lambda}}f_{\mathbf{\alpha}-\mathbf{\lambda}}
  =0.
\end{equation}
This equation holds for every $\alpha$ with $|\alpha|$ being odd.

\subsection{Eigenvalue problem}
Using the recursion relation of Hermite polynomials, the first two terms in (\ref{eqn:wigner-eigenvalue-problem}) can be expanded as
\begin{equation}
  \begin{aligned}
    \left(-\frac{1}{4}\nabla_\mathbf{x}^2+|\mathbf{p}|^2\right)f(\mathbf{x},\mathbf{p})
    =\sum\limits_\mathbf{\alpha}\bigg{(}-\frac{1}{4}\nabla_\mathbf{x}^2f_\mathbf{\alpha}
    +\sum\limits_{j=1}^D\left((\alpha_j+2)(\alpha_j+1)f_{\mathbf{\alpha}+2\mathbf{e}_j}+(2\alpha_j+1)f_\mathbf{\alpha}+f_{\mathbf{\alpha}-2\mathbf{e}_j}\right)\bigg{)}
    \mathcal{H}_\mathbf{\alpha}(\mathbf{p}).
  \end{aligned}
\end{equation}
With the help of the differential relation, the last term with local expression (\ref{eqn:H-wigner-original-part2-taylor}) can be written as
\begin{equation}
  \int V_{\rm eig}(\mathbf{x},\mathbf{p}')f(\mathbf{x},\mathbf{p}-\mathbf{p}')d\mathbf{p}'
  =\sum\limits_\mathbf{\alpha}\bigg{(}\sum\limits_{\mathbf{\lambda}\leqslant\mathbf{\alpha},|\mathbf{\lambda}|\text{even}}\frac{2}{\mathbf{\lambda}!(2i)^{|\mathbf{\lambda}|}}\frac{\partial^\mathbf{\lambda} V}{\partial \mathbf{x}^\mathbf{\lambda}}f_{\mathbf{\alpha}-\mathbf{\lambda}}\bigg{)}\mathcal{H}_\mathbf{\alpha}(\mathbf{p}).
\end{equation}
Similar to the derivation of (\ref{eqn:stationary-wigner-eqn-hermite}), the orthogonality of Hermite polynomials yields
\begin{equation}\label{eqn:wigner-eigenvalue-problem-hermite}
  \begin{aligned}
    -\frac{1}{4}\nabla_\mathbf{x}^2f_\mathbf{\alpha}
    +\sum\limits_{j=1}^D\left((\alpha_j+2)(\alpha_j+1)f_{\mathbf{\alpha}+2\mathbf{e}_j}+(2\alpha_j+1)f_\mathbf{\alpha}+f_{\mathbf{\alpha}-2\mathbf{e}_j}\right)
    +\sum\limits_{\mathbf{\lambda}\leqslant\mathbf{\alpha},|\mathbf{\lambda}|\text{even}}\frac{2}{\mathbf{\lambda}!(2i)^{|\mathbf{\lambda}|}}\frac{\partial^\mathbf{\lambda} V}{\partial \mathbf{x}^\mathbf{\lambda}}f_{\mathbf{\alpha}-\mathbf{\lambda}}
    =2Ef_\mathbf{\alpha}.
  \end{aligned}
\end{equation}
In the above equation, we choose $\alpha$ such that $|\alpha|$ is even.

So far we have already derived the general eigenvalue problem and the constraint based on stationary Wigner equation for our model. In next section, we will restrict ourselves to one-dimensional case, and provide a feasible numerical framework on the strength of imaginary time propagation method.

\section{One-dimensional numerics}\label{sec:itp}
Now we restrict ourselves to one-dimensional case, and take $M=2K$, 
where $K\in\mathbb{N}$. Taking $\alpha=2k+1$ in (\ref{eqn:stationary-wigner-eqn-hermite}) we find
\begin{equation}\label{eqn:stationary-wigner-eqn-hermite-1D}
(2k+2)\frac{\partial f_{2k+2}}{\partial x}+\frac{\partial f_{2k}}{\partial x}
+\sum\limits_{l =0}^k\frac{1}{(2l+1)!(2i)^{2l}}\frac{\partial^{2l+1}V}{\partial x^{2l+1}}f_{2k-2l}
=0,
~~~k=0,1,\dots,K.
\end{equation}
\begin{remark}
  For $0\leqslant k\leqslant K$, since
  \begin{equation}
	    \int f_{2k}(x)dx=\frac{1}{(2k)!}\iint\he_{2k}(p)f(x,p)dxdp
	    =\frac{1}{(2k)!}\langle\he_{2k}(p)\rangle
  \end{equation}
  must be a finite expectation value, we have
  $\lim_{x\rightarrow-\infty}f_{2k}(x)=0$ for all $k\geqslant0$, 
  integrating both sides of (\ref{eqn:stationary-wigner-eqn-hermite-1D}) from $-\infty$ to $x$ we obtain
  \begin{equation}
	(2k+2)f_{2k+2}(x)+f_{2k}(x)+\sum\limits_{l =0}^k\frac{1}{(2l+1)!(2i)^{2l}}\int_{-\infty}^x\left(\frac{\partial^{2l+1}V}{\partial x^{2l+1}}f_{2k-2l}\right)(s)ds
	=0.
  \end{equation}
  Thus $f_{2k+2}$ can be calculated by $\{f_{2l}\}_{l=0}^k$. Once given the density $\rho(x)=f_0(x)$, we could construct $\{f_{2k}\}_{k=1}^\infty$ recursively. Therefore in one-dimensional case, the Wigner function of pure state is determined by the density.
\end{remark}
The corresponding eigenvalue problem is given by taking $\alpha=2k$ in (\ref{eqn:wigner-eigenvalue-problem-hermite})
\begin{equation}\label{eqn:H-wigner-taylor-simplied-hermite-1D}
\begin{aligned}
-\frac{1}{4}\nabla_x^2f_{2k}
+(2k+2)(2k+1)f_{2k+2}+(4k+1)f_{2k}+f_{2k-2}
+\sum\limits_{l =0}^k\frac{2}{(2l )!(2i)^{2l}}\frac{\partial^{2l}V}{\partial x^{2l}}f_{2k-2l}
=2Ef_{2k},
~~~k=0,1,\dots,K.
\end{aligned}
\end{equation}
It is noted that (\ref{eqn:H-wigner-taylor-simplied-hermite-1D}) depicts an underdetermined eigenvalue system of coupled functions, which is difficult to directly solve by commonly-used method such like block Schur factorization due to its coupled structure. Instead we consider solving it from the perspective of time propagation. In next subsection we will introduce imaginary time propagation method for the utilization of ground state calculation. It is worth to mention that the truncated system is underdetermined due to the appearance of $f_{2K+2}$, a reconstruction approach is proposed in Subsection \ref{subsec:reconstruction} to achieve a well-posed system. Finally, the numerical detail is shown in the last subsection.
\subsection{Imaginary time propagation method}
Imaginary time propagation (ITP) method is a widely-used method for solving eigenvalue problems such as the time-independent Schr\"odinger equation. Its basic idea is to use the fact that for any Hermitian operator $H$ and large $t$, we have $\exp(-tH) \Psi$ is approximately an eigenfunction of $H$ associated with its smallest eigenvalue, given that $\Psi$ is not orthogonal to the corresponding eigenspace. In order to prevent the eigenfunction from being too large/small, renormalization is applied during the time propagation. In the context of finding the ground state, we can illustrate this idea by considering the time-dependent Schr\"odinger equation
\begin{equation}\label{eqn:psi-time-dependent-schrodinger}
  i\frac{\partial\Psi(x,t)}{\partial t}
  =H_{\rm s}\Psi(x,t),
  ~~~\Psi(x,0)=\Psi_0(x),
\end{equation}
where $H_{\rm s}=-\nabla^2/2+V$ is the Hamiltonian in the Schr\"odinger picture. 
We adopt Wick rotation of the time coordinate, i.e., $t=-i\tau$, then (\ref{eqn:psi-time-dependent-schrodinger}) becomes
\begin{equation}\label{eqn:psi-itp}
  -\frac{\partial\Psi(x,\tau)}{\partial \tau}
  =H_s\Psi(x,\tau),
  ~~~
  \Psi(x,0)=\Psi_0(x),
\end{equation}
with the formal solution
\begin{equation}
  \Psi(x,\tau)
  =\text{e}^{-\tau H}\Psi_0(x)
  \rightarrow
  \text{e}^{-E_0\tau}\psi_0(x)
  ~~~\text{as}~~~t\rightarrow\infty.
\end{equation}
Since the ground state eigenfunction $\psi_0$ has the lowest eigenvalue $E_0$, given the initial state that is not orthogonal to $\psi_0$, the normalized steady-state of (\ref{eqn:psi-itp}) yields the ground state of the time-independent Schr\"odinger equation (\ref{eqn:psi-time-independent-schrodinger}) as other exponentials decay more rapidly. To find the excited state, we only need to propagate several functions simultaneously with orthogonalization process after each step.

Following the similar way to derive (\ref{eqn:stationary-wigner-eqn-hermite}), with the help of (\ref{eqn:psi-itp}), direct calculating the time derivative of Wigner function in the expression of (\ref{eqn:wigner-definition}) we obtain the equations for ITP. Recall the eigenvalue problem for the Wigner function (\ref{eqn:H-wigner-taylor-simplied-hermite-1D}), the ITP governing equations for coefficient functions are the ones by replacing the right-hand side of (\ref{eqn:H-wigner-taylor-simplied-hermite-1D}) with negative time derivative:
\begin{equation}\label{eqn:coefficient-funciton-itp}
  \begin{aligned}
    -\frac{1}{4}\nabla_x^2f_{2k}
    +(2k+2)(2k+1)f_{2k+2}+(4k+1)f_{2k}+f_{2k-2}
    +\sum\limits_{l =0}^k\frac{2}{(2l )!(2i)^{2l}}\frac{\partial^{2l}V}{\partial x^{2l}}f_{2k-2l}
    =-\frac{\partial f_{2k}}{\partial \tau},
  \end{aligned}
\end{equation}
where $k=0,1,\dots,K$.

It is desired to mention that the constraint of stationary Wigner equation makes the gap between the smallest eigenvalue and second-smallest eigenvalue larger, i.e., from $(E_0+E_1)/2-E_0$ to $E_1-E_0$, which accelerates the convergence process. Although the infinite system satisfies the constraint of stationary Wigner equation if the initial condition satisfies such constraint. Due to the closure and numerical error arising during time evolution, we enforce this condition at each time step. The evolution of $\{f_{2k}\}_{k=0}^K$ is depicted by the following two-step procedure
\begin{subequations}
  \begin{empheq}[left={\empheqlbrace\,}]{align}
    &[f_0,f_2,\dots,f_{2K}]^T
      =\textrm{e}^{-\tau H_{\rm cf}}[f_0,f_2,\dots,f_{2K},f_{2K+2}]^T;\label{eqn:constrain td problem-governing eqn}\\
    &[f_0,f_2,\dots,f_{2K}]^T=P[f_0,f_2,\dots,f_{2K}]^T\label{eqn:contraint td problem-constrain}
  \end{empheq}
\end{subequations}
where (\ref{eqn:constrain td problem-governing eqn}) is rewritten from (\ref{eqn:coefficient-funciton-itp}). The projection step (\ref{eqn:contraint td problem-constrain}) is to guarantee that (\ref{eqn:stationary-wigner-eqn-hermite-1D}) holds. In next subsection, we will propose a reconstruction method to deal with the underdetermined system (\ref{eqn:constrain td problem-governing eqn}) and utilize the projection (\ref{eqn:contraint td problem-constrain}).

\subsection{Reconstruction}\label{subsec:reconstruction}
Notice that (\ref{eqn:constrain td problem-governing eqn}) depicts a underdetermined system due to the appearance of $f_{2K+2}$, to close the system we consider the reconstruction of $f_{2K+2}$. As discussed at the beginning of this section, $f_{2K+2}$ is determined by $\{f_{2k}\}_{k=0}^K$ from stationary Wigner equation (\ref{eqn:stationary-wigner-eqn-hermite-1D}). On the other hand, stationary Wigner equation is a constraint to be met for ground state calculation. With appropriate boundary condition, we consider the reconstructing $f_{2K+2}$ based on (\ref{eqn:stationary-wigner-eqn-hermite-1D}). To achieve a system with finite unknowns, we adopt truncation on our domain, i.e., use $[-a,a]$ instead of $\mathbb{R}$, where $a$ is a large positive number. Since $\lim_{x\rightarrow\pm\infty}f_{2k}=0$ as previous discussion, it is reasonable to use the approximation $f_{2K+2}(-a)=0$ as boundary condition for reconstruction.

It is deserved to mention that such reconstruction is not confined to $f_{2K+2}$. Since the approximation $f_{2k+2}(-a)=0$ also works for $0\leqslant k<K$, start from $f_0$, we can construct $f_{2k}$ for $k=1,2,\dots,K$ in turn, and finally obtain coefficient functions $\{f_{2k}\}_{k=0}^K$ satisfying (\ref{eqn:contraint td problem-constrain}). Therefore, such reconstruction is also adopted for the utilization of (\ref{eqn:contraint td problem-constrain}). 

\subsection{Numerical discretization}
Since the Crank-Nicolson scheme is unconditionally stable and second-order in time, it is adopted for temporal discretization as following approximation
\begin{equation}\label{eqn:eHcf-approximation}
  \text{e}^{-\Delta\tau H_{\rm cf}}
  \approx\frac{1+\Delta\tau H_{\rm cf}/2}{1-\Delta\tau H_{\rm cf}/2}.
\end{equation}
With regard to spatial discretization, we consider finite element method (FEM). The coefficient functions $\{f_{2k}\}_{k=0}^K$ are approximated by the linear combination of piecewise-polynomial basis functions $\{\phi_i\}$, where $\{\phi_i\}$ is defined on a set of real space interpolation nodes $\{p_i\}$. Denote coefficients by $\{\varphi_{k,i}\}$, then we have:
\begin{equation}
  f_{2k}\approx\sum\limits_{i=1}^{N_{\rm basis}}\varphi_{k,i}\phi_i,~~~k=0,1,\dots,K,
\end{equation}
where $N_{\rm basis}$ stands for the dimension of space $V_h=\spn\{\phi_i,1\leqslant i\leqslant N_{\rm basis}\}$. And $\phi_i$ is the $i$-th basis function that is typically chosen such that $\phi_i(p_j)=\delta_{i,j}$, $\delta_{i,j}$ is Kronecker delta function. As a result, we have $\varphi_{k,i}=f_{2k}(p_i)$, for $1\leqslant i\leqslant N_{\rm basis}$ and $0\leqslant k\leqslant K$. Let the superscript $(n)$ denote the corresponded term at time $\tau=n\Delta\tau$, where $\Delta\tau$ is the time step, and denote $\varphi_k=[\varphi_{k,1},\varphi_{k,2},\dots,\varphi_{k,N_{\rm basis}}]^T$. Introducing the approximation (\ref{eqn:eHcf-approximation}) to finite element discretization of the equation (\ref{eqn:coefficient-funciton-itp}) within the subspace $V_h$, we obtain
\begin{equation}\label{eqn:coefficient-function-itp-discritized}
  \begin{aligned}
    M\varphi_k^{(n+1)}-\frac{1}{2}\Delta\tau H_{\rm cf}^{\rm FEM}(\varphi_0^{(n+1)},\varphi_1^{(n+1)},\dots,\varphi_{k}^{(n+1)},\varphi_{k+1}^{(n+1)})
    =M\varphi_{k}^{(n)}+\frac{1}{2}\Delta\tau H_{\rm cf}^{\rm FEM}(\varphi_0^{(n)},\varphi_1^{(n)},\dots,\varphi_{k}^{(n)},\varphi_{k+1}^{(n)}),
  \end{aligned}
\end{equation}
where
\begin{equation}\label{eqn:Hcf-coefficient-function-itp-discritized}
  \begin{aligned}H_{\rm cf}^{\rm FEM}(\varphi_0^{(n)},\varphi_1^{(n)},\dots,\varphi_{k}^{(n)},\varphi_{k+1}^{(n)})
    =(2k+2)(2k+1)M\varphi_{k+1}^{(n)}+\left(\frac{1}{4}S+(4k+1)M\right)\varphi_{k}^{(n)}+M\varphi_{k-1}^{(n)}
    +\sum\limits_{l=0}^k\frac{2}{(2l )!(2i)^{2l}}M^{(2l )}\varphi_{k-l}^{(n)}.
  \end{aligned}
\end{equation}
In the equations above,
\begin{equation}\label{eqn:matrix-coefficient-function-itp-discritized}
  S=\left[\int_\Omega\nabla\phi_i\cdot\nabla\phi_jdx\right],
  ~~~M=\left[\int_\Omega\phi_i\phi_jdx\right],
  ~~~M^{(m )}=\left[\int_\Omega\frac{\partial^{m}V}{\partial x^{m}}\phi_i\phi_jdx\right],
\end{equation}
and $\Omega=[-a,a]$ is the truncated domain. The finite element discretization of (\ref{eqn:stationary-wigner-eqn-hermite-1D}) yields the equations for reconstruction
\begin{equation}\label{eqn:reconstruct}
  -(2k+2)A\varphi_{k+1}^{(n)}+A\varphi_{k}^{(n)}+\sum\limits_{l =0}^k\frac{1}{(2l+1)!(2i)^{2l}}M^{(2l+1)}\varphi_{k-l}^{(n)}=0,
  ~~~k=0,1,\dots,K,
\end{equation}
where
\begin{equation}
  A=\left[\int_\Omega\frac{\partial\phi_i}{\partial x}\phi_jdx\right]
\end{equation}
and $M^{(m)}$ is defined as above. While the normalization condition of $\{\varphi_k\}_{k=0}^K$
\begin{equation}
    \int f_0 dx=1
\end{equation}
becomes
\begin{equation}\label{eqn:vphi-normalization-condition}
  \sum\limits_{i=0}^{N_{\rm basis}}\varphi_{0,i}\int_\Omega\phi_idx=1.
\end{equation}
The main components of the flow chat are introduced as follows

\begin{algorithm}[H]\label{alg:1}
	\caption{One-dimensional ground state calculation of Wigner function based on ITP}
	\KwData{Truncation order $K$, time $T$, time step $dt $, tolerance $tol $, initial value $\{\varphi_{k}^{\rm ini}\}_{k=0}^K$.}
	\KwResult{Coefficient functions of Wigner function of ground state, represented by $\{\varphi_k^{\rm now}\}_{k=0}^K$.}
	$\varphi_k^{\rm now}\leftarrow\varphi_k^{\rm ini}$,	$\varphi_k^{\rm last}\leftarrow\varphi_k^{\rm ini}$ for $0\leqslant k\leqslant K$\;
	\While{$t<T$ and $err>tol $}{
		Reconstruct $\varphi_{K+1}^{\rm now}$ by the approximation $f_{2K+2}(-a)=0$ and (\ref{eqn:stationary-wigner-eqn-hermite-1D})\;
		Time propagation based on ITP method (\ref{eqn:coefficient-function-itp-discritized}), (\ref{eqn:Hcf-coefficient-function-itp-discritized}) and (\ref{eqn:matrix-coefficient-function-itp-discritized})\;
		Apply strategy to restrict solution space, i.e., set $f_0$ to be even (optional)\;
		Orthogonalization process by (\ref{eqn:orthogonality}) (optional for calculation of excited state)\;
		Normalize $\{\varphi_{k}^{\rm now}\}_{k=0}^K$ based on (\ref{eqn:vphi-normalization-condition})\;
		\For{$k=1:K$}{
			Construct $\varphi_{k}^{\rm now}$ by $\{\varphi_{l}^{\rm now}\}_{l =0}^{k-1}$ using the same method to construct $\varphi_{K+1}^{\rm now}$\;
		}
		Calculate $err=\max_{0\leqslant k\leqslant K}\|\varphi_{k}^{\rm now}-\varphi_{k}^{\rm last}\|_\infty$, $t=t+dt $, $\varphi_{k}^{\rm last}=\varphi_{k}^{\rm now}$ for $0\leqslant k\leqslant K$\;
	}
\end{algorithm}
\begin{remark}\rm
  As indicated in (\ref{eqn:wigner-definition}) that Wigner function depends on only one variable, thus an arbitrary two-variable function cannot be expressed by linear combination of the Wigner functions of pure states. A straightforward idea is to consider the strategy to restrict solution space. Based on our numerical experiment, although the convergence to ground state can be observed without such strategy, it can accelerate the convergence process. Furthermore, this strategy becomes necessary if the excited state is taken into consideration.
  
  Particularly, when we have a symmetric potential, 
  it can be easily proved that the Schr\"odinger eigenfunction is either odd or even, which always leads to an even density. Therefore artificially taking $f_0$ to be even is a reasonable choice to restrict the solution space in such cases.
\end{remark}

\section{Numerical experiments}\label{sec:numerical result}

In this section, the numerical effectiveness of our method is verified
by following four examples. The first example is a quantum harmonic
oscillator, which is a fundamental example employed as a sanity check. Meanwhile, the influence of size of domain for the
simulation is studied in this example, and the ability of calculating the
excited states using our approach is also demonstrated by supplementing Algorithm \ref{alg:1} with additional orthogonalization
process. Second, we consider a one-dimensional
hydrogen system, where the solution contains singularity, and thus we need at least
quadratic finite elements to produce a reliable ground
state solution. The third and fourth examples are, respectively, a
two-particle Hooke's atom system and its
variation, where both systems are represented
in the context of the density functional theory. Numerical results from
these two examples show the potential of our method for simulating
large scale systems.

In all examples, the computational domain is chosen as 
$[-10,10]$ with a uniform mesh unless otherwise specified. In each example, different mesh sizes
  $h$ and truncation orders $K$ are tested to show the convergence rate
  and the influence of truncation order. Error estimation is given by
  the infinity norm.

\subsection{Harmonic oscillator}\label{eg:1}

We consider the potential of harmonic oscillator
\begin{equation}
  V=\frac{1}{2}\omega^2x^2.
\end{equation}
For the sake of simplicity, we take $\omega=1$. Then the Wigner function of the $n$-th eigenstate is given by \cite{groenewold1946}
\begin{equation}
  f^{(n)}(x,p)
  =\frac{(-1)^n}{\pi}\exp\left[-2\left(\frac{p^2}{2}+\frac{x^2}{2}\right)\right]L_n\left(4\left(\frac{p^2}{2}+\frac{x^2}{2}\right)\right),
  ~~~\text{with energy}~~~E_n=\frac{2n+1}{2},
\end{equation}
where $L_n(x)$ is the $n$-th Laguerre polynomial.

We first discuss the choices of truncation on domain. The numerical
results obtained from different sizes of domain are demonstrated in
Figure \ref{fig:harmonic-oscillator-truncation}.
\begin{figure}[H]
  \centering
  \includegraphics[width=.45\linewidth]{./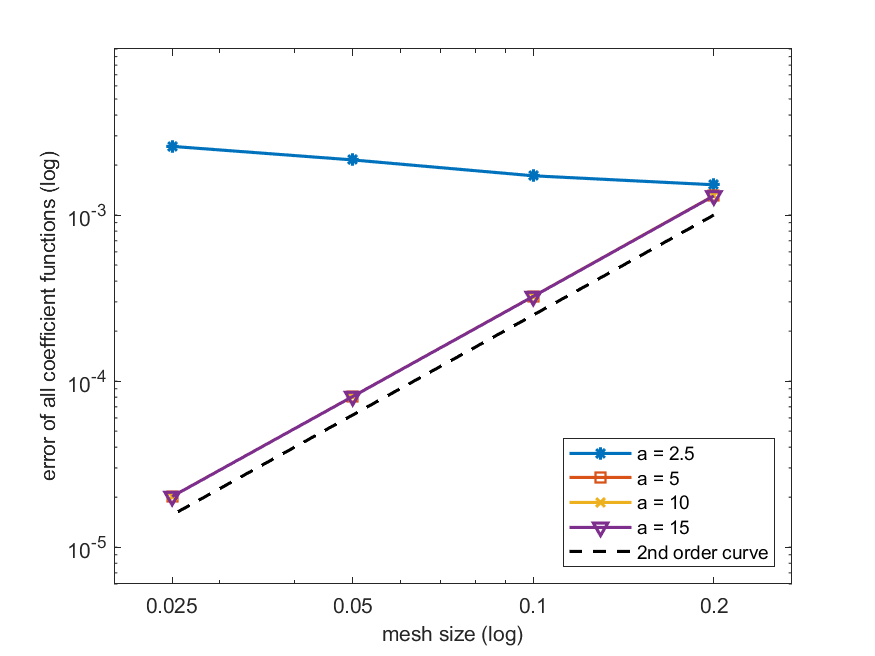}
  \includegraphics[width=.45\linewidth]{./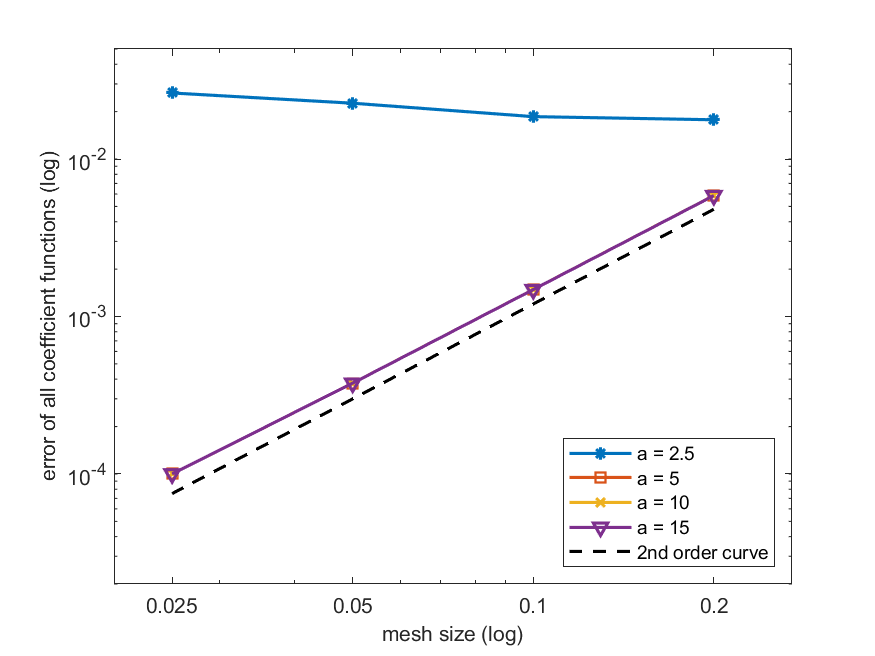}
  \caption{Errors $\max_{0\leqslant k\leqslant K}\|f_{2k}-f_{2k}^{\rm exc}\|_\infty$ for different truncated domain with $K=10$ for ground state (left) and 1st excited state (right).}
  \label{fig:harmonic-oscillator-truncation}
\end{figure}

It can be observed from Figure
\ref{fig:harmonic-oscillator-truncation} that with a small domain $a = 2.5$,
the desired numerical convergence cannot be observed. However, with
the increment of the domain size, theoretical convergence rate of
linear finite element method can be successfully obtained with
$a=5,10,15$. To balance numerical accuracy and efficiency, in our
following simulations, $a = 10$ is always used.

Next, we would like to show the difference between the algorithms with/without the projection step (\ref{eqn:contraint td problem-constrain}).
It follows the discussion in the previous sections that introducing the constraint of stationary Wigner equation helps accelerate the convergence, as is also confirmed by our numerical experiments.
It is illustrated in the Figure \ref{fig:ho-compare-construction} that the error of the first excited state fails to decrease to $1.0\times10^{-10}$ even with sufficiently long time. Based on our numerical experiment, it is worth mentioning that the constraint of the stationary Wigner equation also contributes to suppressing some possible numerical or modelling errors accumulated during the simulation. In fact, divergent results have been observed in our experiments without such a constraint.

\begin{figure}[H]
    \centering
    \includegraphics[width=.45\linewidth]{./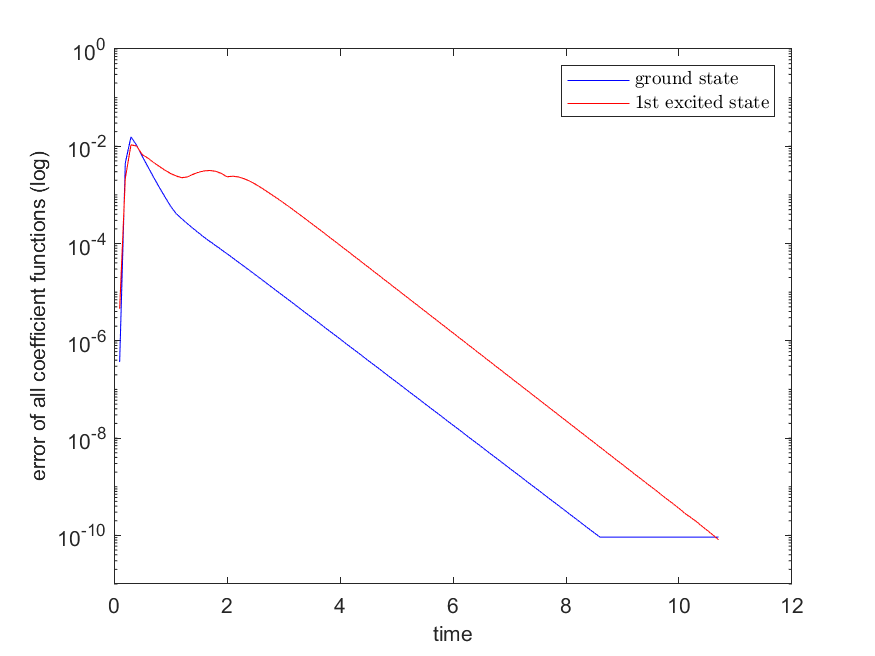}
    \includegraphics[width=.45\linewidth]{./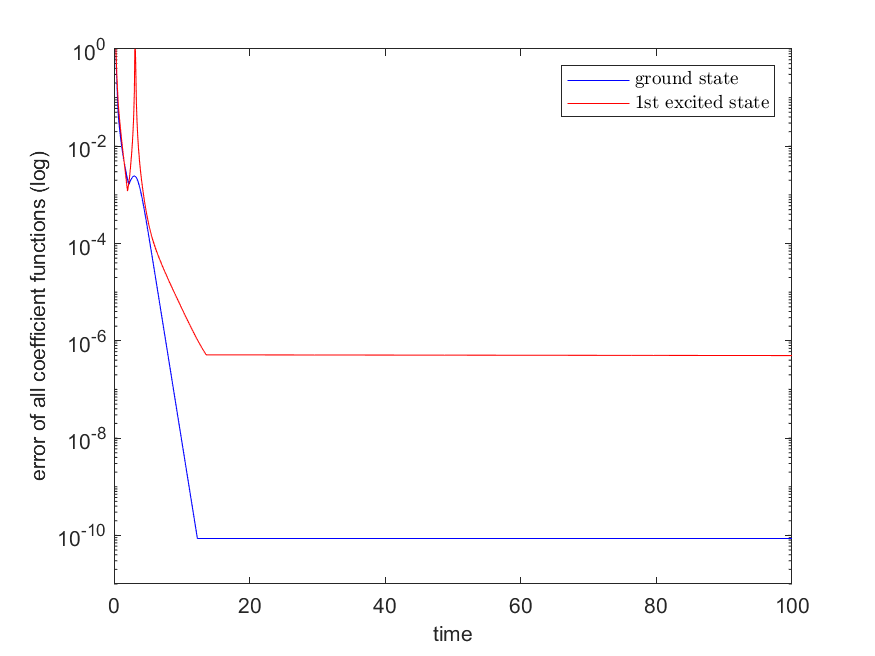}
    \caption{Decay of errors $\max_{0\leqslant k\leqslant K}\|f_{2k}^\textrm{now}-f_{2k}^\textrm{last}\|_\infty$ for numerical simulation with(left)/without(right) constraint of stationary Wigner equation.}
    \label{fig:ho-compare-construction}
\end{figure}

Besides the ground state, it is noted that by adding an
orthogonalization process in Algorithm \ref{alg:1}, excited states of
the system can be calculated simultaneously. 
Furthermore, the
effect of truncation order $K$ for the numerical solution is also
studied in our numerical experiments. The numerical results are
provided in Figure \ref{fig:harmonic-oscillator-result}.

\begin{figure}[H]
  \centering
  \includegraphics[width=.32\linewidth]{./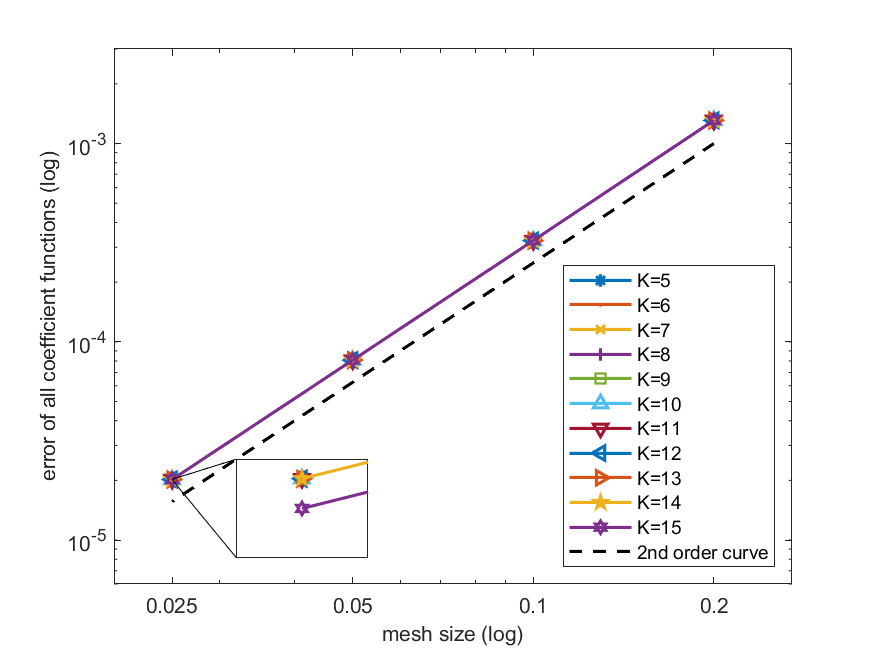}
  \includegraphics[width=.32\linewidth]{./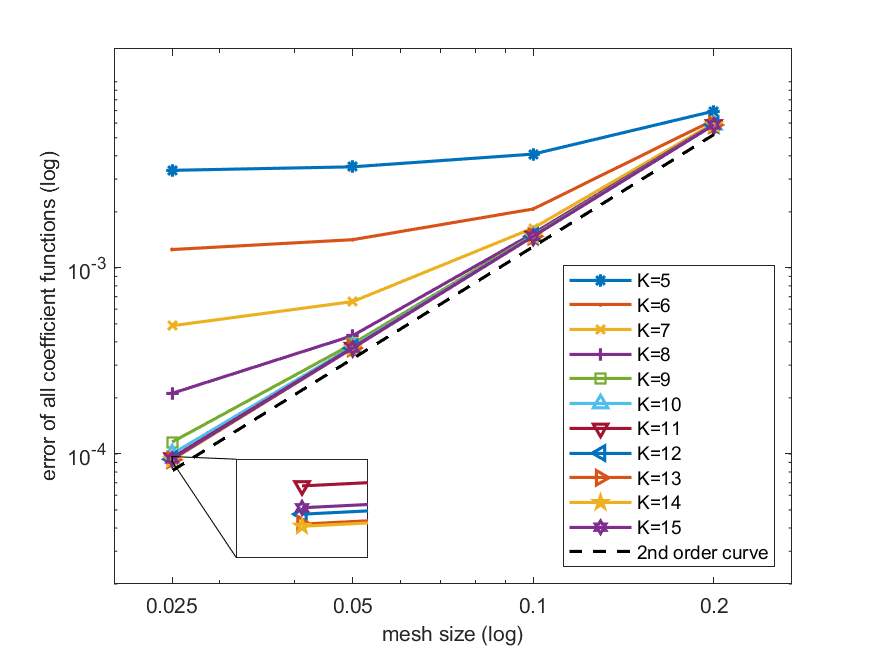}
  \includegraphics[width=.32\linewidth]{./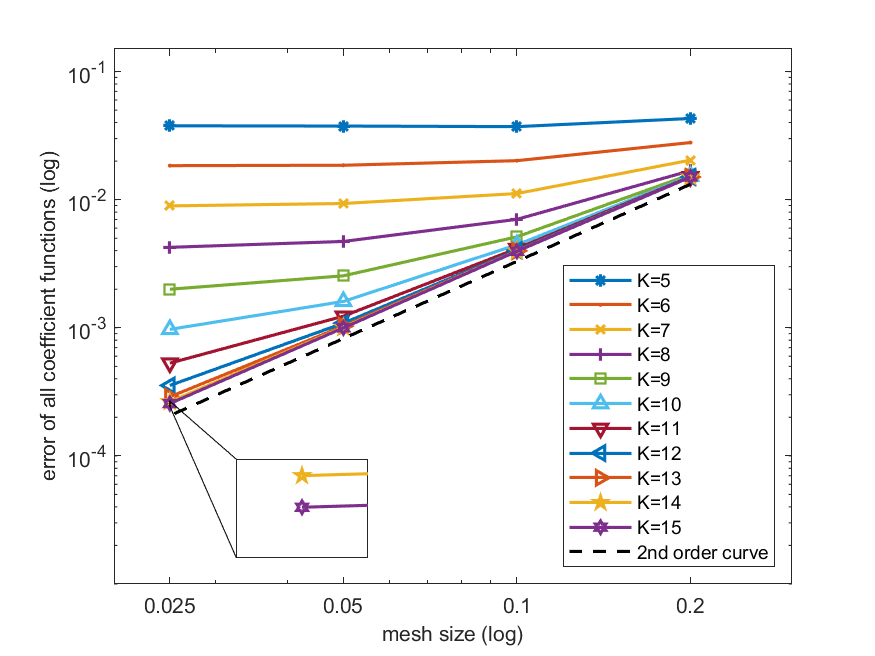}
  \caption{Errors $\max_{0\leqslant k\leqslant K}\|f_{2k}-f_{2k}^{\rm
      exc}\|_\infty$ for the ground state, 1st excited state and 2nd
    excited state, respectively (from left to right).}
  \label{fig:harmonic-oscillator-result}
\end{figure}
The numerical observations of Figure
\ref{fig:harmonic-oscillator-result} are summarized as follows: i) In the computation of all the states, second-order convergence with respect to the mesh size is observed. ii) The calculation of ground state converges fast with respect to $K$;
    our numerical result for $K = 5$ almost
    coincides with that for $K=15$. The convergence is slower for
    the excited state calculation, but the behavior still looks like the spectral convergence due to the smoothness of the solution.
    iii) Larger truncation order $K$ is needed when higher excited
    state is calculated. This might be caused by the accumulation of
    numerical error from the orthogonalization process. In addition,
    since the 2nd excited state has higher energy, more terms in
    Hermite expansion are needed to depict its more oscillatory
    behavior.

\subsection{One-dimensional hydrogen}\label{eg:2}

The one-dimensional hydrogen system consists of an electron moving in
the one-dimensional potential
\begin{equation}
  V(x)=-\frac{1}{|x|},
\end{equation}
with boundary condition $\rho(x)=0$. It is shown in \cite{loudon1959}
that the ground state density is
\begin{equation}
  \rho_0(x)=2x^2e^{-2|x|}~~~\text{with energy}~~~E_0=-\frac{1}{2}.
\end{equation}
It follows (\ref{eqn:wigner-definition}) that its Wigner function reads
\begin{equation}
  \begin{aligned}
    f(x,p)
    =2e^{-2|x|}\left(x^2\delta(p)+\frac{1}{4}\delta''(p)\right).
  \end{aligned}
\end{equation}
In our numerical experiment, it is found that linear finite element
method fails to provide a convergent result, which can be explained as
follows. 

Substituting $f_0=2x^2\textrm{e}^{-2|x|}+\varepsilon_0$ into the following numerical construction of $f_2$ we find
\begin{equation}\label{eqn:f2-construction}
  \begin{aligned}
    2\frac{\partial f_2}{\partial x}+\frac{\partial f_0}{\partial x}
    &=-\frac{\partial V}{\partial x}f_0
    =\frac{\sgn(x)}{x^2}\cdot\bigg{(}2x^2e^{-2|x|}+\varepsilon_0\bigg{)}
    =2\sgn(x)e^{-2|x|}+\frac{\sgn(x)}{x^2}\cdot\varepsilon_0,
  \end{aligned}
\end{equation}
Therefore
in the element containing original point, we have the numerical error
$\varepsilon_0/x^2=O(1)$ which prevents the numerical convergence to the exact solution. To tackle this problem, we consider the quadratic
finite element basis functions in this example, which yields
theoretical error $O(h )$ by (\ref{eqn:f2-construction}).
\begin{figure}[H]\label{fig:hydrogen}
  \centering
  \includegraphics[width=.5\linewidth]{./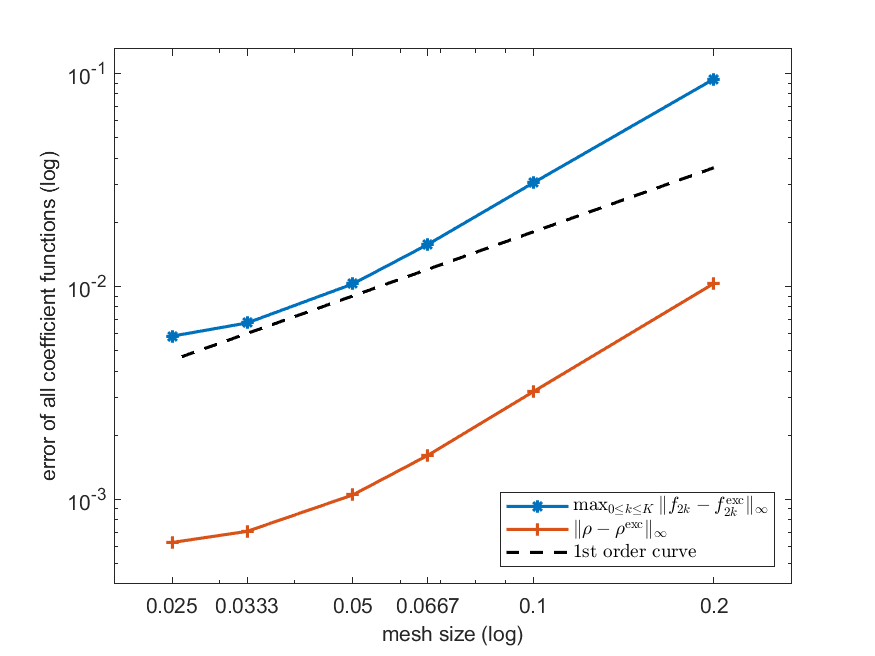}
  \caption{Numerical errors of one-dimensional hydrogen when $K=1$.}
\end{figure}

Numerical results are shown in Figure \ref{fig:hydrogen}, where we
present both the error of all coefficient functions and the error of
density. The following observations can be made from the result: i) The error of density (orange curve) is better than that of all
    coefficient functions (blue curve), as the normalization condition
    acts on density.
    ii) It is observed that when the mesh size is large, the convergence rate is
    higher than the theoretical one. As the mesh is refined, the error gradually saturates due to the truncation of the infinite system, and eventually the overall error is dominated by the reconstruction of $f_4$ when the mesh size is sufficiently small.
    Note that for larger $K$, the higher-order derivatives of $V$ turn out to be more singular, leading to the requirement of higher-order numerical methods to guarantee the convergence. 
    

\subsection{Hooke's atom}\label{eg:3}

In this example we consider the Hooke's atom consisting of two
electrons oscillating in the parabolic well. The two-particle
Schr\"odinger equation is
\begin{equation}\label{eqn:Psi-two-electron-parabolic-well}
  \left(-\frac{1}{2}\frac{\partial^2}{\partial x_1^2}-\frac{1}{2}\frac{\partial^2}{\partial x_2^2}+V_{\rm ext}(x_1)+V_{\rm ext}(x_2)+\frac{1}{|x_1-x_2|}\right)\Psi(x_1,x_2)=E\Psi(x_1,x_2),
\end{equation}
where the external potential of Hooke's atom is
\begin{equation}\label{eqn:Vext-two-electron-parabolic-well}
  V_{\rm ext}(x)=\frac{1}{2}k_{\rm Hooke}x^2,~~~k_{\rm Hooke}=\frac{1}{4}.
\end{equation}
It follows the derivation in Appendix B that the ground state density is
\begin{equation}
  \begin{aligned}
    \rho(x)
    =2\int|\Psi(x,x_2)|^2dx_2
    =2C^2e^{-\frac{x^2}{2}}\bigg{(}
    (4+2x^2)e^{-\frac{x^2}{2}}
    +\sqrt{2\pi}\left(\frac{7}{4}+\frac{5}{2}x^2+\frac{1}{4}x^4\right)
    +\sqrt{2\pi}\erf\left(\frac{x}{\sqrt{2}}\right)(3x+x^3)
    \bigg{)},
  \end{aligned}
\end{equation}
where $C=(16\sqrt{\pi}+10\pi)^{-1/2}$, and $\erf(x)$ is the error function defined as
\begin{equation}
  \erf(x)=\frac{2}{\sqrt{\pi}}\int_0^xe^{-y^2}dy.
\end{equation}
Since $\erf(x)$ is odd by its definition, $\rho(x)$ is an even
function.

This two-particle system can be transformed into the Kohn-Sham
equation
\begin{equation}\label{eqn:psi-Kohn-Sham}
  \left(-\frac{1}{2}\frac{d^2}{dx^2}+V_{\rm KS}\right)\psi_i=\varepsilon_i\psi_i(x),~~~i=1,2,
\end{equation}
where
\begin{equation}\label{eqn:VKS}
  V_{\rm KS}=V_{\rm ext}+V_{\rm H}+V_{\rm XC}.
\end{equation}
The Kohn-Sham orbitals of (\ref{eqn:psi-Kohn-Sham}) are given by
\begin{equation}
  \psi_i(x)=\sqrt{\rho(x)/2},~~~i=1,2.
\end{equation}
With $\psi=\psi_1=\psi_2$ and
$\varepsilon=\varepsilon_1=\varepsilon_2$, and with the assumption
that the functional derivative of $V_{\rm H}+V_{\rm XC}$ vanishes at
infinity, the Kohn-Sham potential can be exactly expressed as
\cite{pinchus1986}
\begin{equation}
  \begin{aligned}
    V_{\rm KS}(x)
    &=\varepsilon+\frac{1}{2}\frac{d^2\psi(x)/dx^2}{\psi(x)}
    =\varepsilon\texttt{}+\frac{\rho''(x)\rho(x)-\left(\rho'(x)\right)^2/2}{4\rho(x)^2}.
  \end{aligned}
\end{equation}
Since $\rho(x)$ is even, $V_{\rm KS}(x)$ is also an even function.

The numerical results for $K=0,1,2$ with our method
are listed in following table.
\begin{table}[H]
	\centering
	\caption{Numerical errors and convergence rate of Hooke's atom.}
	\label{tab:two-particle}
	\begin{tabular}{c|cc|cc|cc}\hline
		$K$	&\multicolumn{2}{|c}{0}	&\multicolumn{2}{|c}{1}	&\multicolumn{2}{|c}{2}\\\hline
		$h$	&error	&order	&error	&order	&error	&order\\\hline
		~	&\multicolumn{6}{|c}{$\max_{0\leqslant k\leqslant K}\|f_{2k}-f_{2k}^{\rm exc}\|_\infty$}\\\hline
		2.0e-01	&1.4667e-04 &-
		&1.1195e-03 &-	
		&1.1194e-03 &-\\
		1.0e-01	&3.4544e-05 &2.0861
		&2.5899e-04 &2.1119
		&2.5898e-04 &2.1118\\
		5.0e-02	&8.9058e-06 &1.9556
		&6.3706e-05 &2.0234
		&6.3705e-05 &2.0234\\
		2.5e-02	&2.5259e-06 &1.8179
		&1.5864e-05 &2.0057			
		&1.5864e-05 &2.0057\\\hline
		~	&\multicolumn{6}{|c}{$\|f_0-\rho^{\rm exc}\|_\infty$}\\\hline
		2.0e-01	&- &-
		&1.5784e-04 &-
		&1.5785e-04 &-\\
		1.0e-01	&- &- 
		&3.8595e-05 &2.0320
		&3.8597e-05 &2.0320\\
		5.0e-02	&- &- 
		&9.5975e-06 &2.0077
		&9.5980e-06 &2.0077\\
		2.5e-02	&- &- 
		&2.3961e-06 &2.0020
		&2.3963e-06 &2.0019\\\hline
	\end{tabular}
\end{table}

It can be observed in Table \ref{tab:two-particle} that i) the numerical accuracy of the density is better than
    the high-order coefficients which is similar to the previous example.
  ii) For $K=0$, the numerical convergence towards the theoretical
    result can already be obtained. However, with the refinement of mesh grids,
    the convergence order starts to decrease, since for small $K$, the reconstruction error dominates when
    mesh size is sufficiently small.
    iii) The above issue is remedied when $K$ becomes larger. For
    $K=1,2$, the theoretical convergence order with respect to the grid size is obtained 
    in all the simulations. Furthermore,
    comparable numerical accuracy can be observed from results with
    $K=1$ and $K=2$, which means that in this example, a small $K$ can already provide
    sufficient numerical accuracy.

\subsection{Contact-interacting Hooke's atom}\label{eg:4}

Now we consider a more practical example by replacing the interacting
function in (\ref{eqn:Psi-two-electron-parabolic-well}) with delta
function
\begin{equation}\label{ci Hooke}
  \left(-\frac{1}{2}\frac{\partial^2}{\partial x_1^2}-\frac{1}{2}\frac{\partial^2}{\partial x_2^2}+V_{\rm ext}(x_1)+V_{\rm ext}(x_2)+\delta(x_1-x_2)\right)\Psi(x_1,x_2)=E\Psi(x_1,x_2),
\end{equation}
where $V_{\rm ext}(x)$ is defined as
(\ref{eqn:Vext-two-electron-parabolic-well}). This two-particle system
can be also transformed into the Kohn-Sham equation as
(\ref{eqn:psi-Kohn-Sham}) with (\ref{eqn:VKS}). The energy of
system is given by \cite{FNM03}
\begin{equation}
  E=\sum\limits_{i=1}^2\varepsilon_i-U_{\rm H}[\rho]-\int V_{\rm XC}([\rho];x)dx+E_{\rm XC}[\rho],
\end{equation}
where
\begin{equation}
  U_{\rm H}[\rho]=\frac{1}{2}\int\rho(x)^2dx,\mbox{ and } E_{\rm
    X}[\rho]=-\frac{1}{4}\int\rho(x)^2dx.
\end{equation}
The local-density correlation energy functional is \cite{magyar2004}
\begin{equation}
  E_{\rm C}^{\rm LDA}[\rho]
  =\int\left(\frac{a\rho(x)^3+b\rho(x)^2}{\rho(x)^2+d\rho(x)+e}\right)dx,
\end{equation}
where $a=-1/24$, $b=-0.00436143$, $d=0.252758$ and $e=0.0174457$.
By these definitions, we have
\begin{equation}
  V_{\rm H}([\rho];x)=\frac{\delta U_{\rm H}[\rho]}{\delta\rho(x)}
  ~~~\text{and}~~~
  V_{\rm XC}([\rho];x)=\frac{\delta (E_{\rm X}[\rho]+E_{\rm C}^{\rm LDA}[\rho])}{\delta\rho(x)}.
\end{equation}

To solve (\ref{ci Hooke}), a self-consistent iteration is employed due
to the nonlinearity\cite{book-A_Primer}. To validate our method, the numerical result of
the Schr\"odinger wave function is calculated as a reference. Both the
numerical results of our method with $K=0,1,2$ and the ones of
Schr\"odinger wave function are listed in Table
\ref{tab:interacting-hooke-comparison}. Here both the Wigner function and the wave function are solved on different grids ranging from $h = 0.2$ to $h = 6.25 \times 10^{-3}$, and the ``energy'' columns list the ground state energy for all our simulations. For the ``error'' columns, we list the difference of the particle densities $\rho_w$ and $\rho_s^{\mathrm{finest}}$, where $\rho_w$ stands for the Wigner function computed using different parameters $h$ and $K$, and $\rho_s^{\mathrm{finest}}$ is obtained from the wave function computed on the finest grid.

\begin{table}[H]
  \centering
  \caption{Numerical errors of density $\|\rho_w-\rho_s^{\rm finest}\|_\infty$ and energy of contact-interacting Hooke's atom.}
  \label{tab:interacting-hooke-comparison}
  \begin{tabular}{c|cc|cc|cc|c}\hline
    $K$	&\multicolumn{2}{|c}{0}	&\multicolumn{2}{|c}{1}	&\multicolumn{2}{|c|}{2} &Schr\"odinger\\\hline
    $h$	&error	&energy	&error	&energy	&error	&energy &energy\\\hline
    2.00e-01	&2.1068e-03 &1.31718
    &1.9710e-03 &1.31727
    &2.2879e-03 &1.31568
    &1.31476\\
    1.00e-01	&5.8217e-04 &1.31398
    &5.1479e-04 &1.31403
    &6.0338e-04 &1.31360
    &1.31347\\
    5.00e-02	&1.7063e-04 &1.31323
    &1.3680e-04 &1.31326
    &1.5979e-04 &1.31315
    &1.31315\\
    2.50e-02	&5.3808e-05 &1.31306
    &3.6827e-05 &1.31308
    &4.2656e-05 &1.31305
    &1.31307\\
    1.25e-02 &1.7723e-05 &1.31304
    &9.2112e-06 &1.31304
    &1.0677e-05 &1.31304
    &1.31305\\
    6.25e-03 &5.2688e-06 &1.31303
    &3.0021e-06 &1.31304
    &3.4460e-06 &1.31304
    &1.31304\\\hline
  \end{tabular}
\end{table}

It is shown in Table \ref{tab:interacting-hooke-comparison} that
i) for each fixed $K$, the convergence towards the Schr\"odinger results can be observed as the mesh is refined;
ii) results of energies for two cases
coincide with each other very well, especially when mesh size is
sufficiently small ($h\le 0.025$). These observations again validate both
the model and the numerical method proposed in this paper.

\begin{table}[H]
  \centering
  \caption{Relative errors $\max_{0\leqslant k\leqslant
      K}\|f_{2k}-f_{2k}^{\rm finest}\|_\infty$ and convergence order
    of contact-interacting Hooke's atom.}
  \label{tab:interacting-hooke-convergence-order}
  \begin{tabular}{c|cc|cc|cc}\hline
    $K$	&\multicolumn{2}{|c}{0}	&\multicolumn{2}{|c}{1}	&\multicolumn{2}{|c}{2}\\\hline
    $h$	&error	&order	&error	&order	&error	&order\\\hline
    2.00e-01	&2.1015e-03 &-
    &1.9700e-03 &-	
    &2.2866e-03 &-\\
    1.00e-01	&5.7690e-04 &1.8650
    &5.1378e-04 &1.9390
    &6.0201e-04 &1.9253\\
    5.00e-02	&1.6536e-04 &1.8027
    &1.3580e-04 &1.9197
    &1.5841e-04 &1.9261\\
    2.50e-02	&4.8540e-05 &1.7684
    &3.5820e-05 &1.9226			
    &4.1282e-05 &1.9401\\
    6.25e-03 &1.2454e-05 &1.9625
    &8.2042e-06 &2.1263
    &9.3032e-06 &2.1497 \\\hline
  \end{tabular}
\end{table}
In Table \ref{tab:interacting-hooke-convergence-order}, the relative
errors of numerical solution are demonstrated with a result obtained
on the finest mesh. 
It can be found that the convergence rate is
around the theoretical one of linear finite element method.

\section{Conclusions}\label{sec:conclusions}

In this paper, a model of Wigner function of eigenstate is proposed,
providing a theoretical foundation for ground state calculation in
Wigner formalism. With a simplified model of \cite{cai2013}, ITP
method is adopted for the realization of one-dimensional ground state
calculation. For validation purposes, we applied our method to the
simulation of several benchmark systems. The aim of first two
experiments is to show the capability of our method to handle excited
state and singularity of potential. While in the last two examples,
with the assistance of DFT, our approach successfully converges to the
ground state. In particular, the consistency of results from our
method with Schr\"odinger solution is observed in the last example,
indicating the feasibility of our method in DFT regime. 

Our ongoing work is to generalize the proposed method in this paper to
three-dimensional case, in which the reconstruction of coefficient
functions would be a nontrivial issue.

\section*{Acknowledgments}
The first author would like to thank the support from Macao PhD Scholarship (MPDS) from University of Macau. The second author was partially  supported by the Academic Research Fund of the Ministry of Education of Singapore under grant Nos. R-146-000-305-114 and R-146-000-291-114. The third author was partially supported by National Natural Science Foundation of China (Grant Nos. 11922120 and 11871489), MYRG of University of Macau (MYRG2019- 00154-FST) and Guangdong-Hong Kong-Macao Joint Laboratory for Data-Driven Fluid Mechanics and Engineering Applications (2020B1212030001).

\appendix
\section{Calculation of the coefficient in the inner product of two Wigner functions}
\label{app:coefficient}
First we consider the calculation of the integral
\begin{equation}
  I_n=\int x^n\exp(-x^2)dx.
\end{equation}
It is evident that $I_n=0$ if $n$ is odd. For nonzero even number $n$, using integration by part we find
\begin{equation}
  I_n=-\frac{1}{2}\int x^{n-1}d\exp(-x^2)
  =-\frac{1}{2}x^{n-1}\exp(-x^2)\bigg{|}_{-\infty}^\infty+\frac{n-1}{2}\int x^{n-2}\exp(-x^2)dx
  =\frac{n-1}{2}I_{n-2}.
\end{equation}
It is noted that $I_0=\sqrt{\pi}$, therefore we have
\begin{equation}
  I_n=\left\{\begin{array}{ll}
    \sqrt{\pi}n!/2^n(n/2)!, &\text{if $n$ is even;}  \\
    0 &\text{if $n$ is odd.} 
  \end{array}\right.
\end{equation}
Substituting it into the calculation of (\ref{eqn:C_aph-int-H_aphH_bet}) we obtain
\begin{equation}
\begin{aligned}
  C_{\alpha}&=\prod\limits_{j=1}^D\int\he_{\alpha_j}(x)\exp(-x^2)dx
  =\prod\limits_{j=1}^D\alpha_j!\sum\limits_{k=0}^{\alpha_j/2}\frac{(-1)^k}{k!(\alpha_j-2k)!}\frac{I_{\alpha_j-2k}}{2^k}\\
  &=\prod\limits_{j=1}^D\alpha_j!\sum\limits_{k=0}^{\alpha_j/2}\frac{(-1)^k}{k!(\alpha_j-2k)!}\frac{1}{2^k}\frac{\sqrt{\pi}(\alpha_j-2k)!}{2^{\alpha_j-2k}(\alpha_j/2-k)!}
  =\prod\limits_{j=1}^D\frac{\sqrt{\pi}\alpha_j!}{2^{\alpha_j}}\sum\limits_{k=0}^{\alpha_j/2}\frac{(-1)^k}{k!}\frac{1}{2^{-k}(\alpha_j/2-k)!}\\
  &=\prod\limits_{j=1}^D\frac{\sqrt{\pi}\alpha_j!}{2^{\alpha_j}}\frac{1}{(\alpha_j/2)!}\left(1-2\right)^{\alpha_j/2}
  =\prod\limits_{j=1}^D\frac{\sqrt{\pi}\alpha_j!}{(\alpha_j/2)!}\left(-\frac{1}{4}\right)^{\alpha_j/2}
  =\left(-\frac{1}{4}\right)^{|\alpha|/2}\frac{\pi^{D/2}\alpha!}{(\alpha/2)!}
\end{aligned}
\end{equation}
if all components of $\alpha$ are even; and $C_\alpha=0$ if one of the component of $\alpha$ is odd.
\section{Conversion between wave function and Wigner coefficient function}
\label{app:conversion}
Given the Wigner function of pure state
\begin{equation}\label{eqn:wigner-definition-pure-state}
  f(\mathbf{x},\mathbf{p})
  =\frac{1}{(2\pi)^D}\int\psi^*\left(\mathbf{x}-\frac{\mathbf{y}}{2}\right)\psi\left(\mathbf{x}+\frac{\mathbf{y}}{2}\right)\exp(-i\mathbf{p}\cdot\mathbf{y})d\mathbf{y}.
\end{equation}
We have expanded it by series involving Hermite polynomials in Section \ref{sec:hermite expansion}. It follows (\ref{eqn:hermite-polynomial-definition}) and (\ref{eqn:coefficient-function-calculation}) that we could define
\begin{equation}\label{eqn:coefficient-function-h-by-f}
  \begin{aligned}
    h_\mathbf{\alpha}(\mathbf{x})
    =\frac{1}{\mathbf{\alpha}!}\int \mathbf{p}^\mathbf{\alpha}f(\mathbf{x},\mathbf{p})d\mathbf{p}
    =\sum\limits_{0\leqslant2\mathbf{\beta}\leqslant\mathbf{\alpha}}\frac{1}{2^{|\mathbf{\beta}|}\mathbf{\beta}!}f_{\mathbf{\alpha}-2\mathbf{\beta}}(\mathbf{x}).
  \end{aligned}
\end{equation}
Conversely we have
\begin{equation}\label{eqn:coefficient-function-f-by-h}
  \begin{aligned}
    f_\mathbf{\alpha}(\mathbf{x})
    =\sum\limits_{0\leqslant2\mathbf{\beta}\leqslant\mathbf{\alpha}}\frac{(-1)^{|\mathbf{\beta}|}}{2^{|\mathbf{\beta}|}\mathbf{\beta}!}h_{\mathbf{\alpha}-2\mathbf{\beta}}(\mathbf{x}).
  \end{aligned}
\end{equation}
Thus the coefficient functions $\{f_\mathbf{\alpha}\}$ is equivalent to $\{h_\mathbf{\alpha}\}$ in the sense that one can be expressed by the linear combination of another.
Using (\ref{eqn:wigner-definition-pure-state}),  $h_\mathbf{\alpha}$ could be calculated by the derivative of wave function:
\begin{equation}\label{eqn:coefficient-function-h}
  \begin{aligned}
    h_\mathbf{\alpha}
    =\frac{1}{\mathbf{\alpha}!(2\pi)^D}\iint\psi^*\left(\mathbf{x}-\frac{\mathbf{y}}{2}\right)\psi\left(\mathbf{x}+\frac{\mathbf{y}}{2}\right)
    \frac{1}{(-i)^{|\mathbf{\alpha}|}}\frac{\partial^\mathbf{\alpha}}{\partial\mathbf{y}^\mathbf{\alpha}}\text{e}^{-i\mathbf{p}\cdot\mathbf{y}}d\mathbf{y}d\mathbf{p}
    =\frac{1}{\mathbf{\alpha}!(2i)^{|\mathbf{\alpha}|}}\sum\limits_{\mathbf{\beta}\leqslant\mathbf{\alpha}}(-1)^\mathbf{|\beta|}\binom{\mathbf{\alpha}}{\mathbf{\beta}}\frac{\partial^\mathbf{\beta}\psi^*}{\partial \mathbf{x}^\mathbf{\beta}}\frac{\partial^{\mathbf{\alpha}-\mathbf{\beta}}\psi}{\partial \mathbf{x}^{\mathbf{\alpha}-\mathbf{\beta}}}.
  \end{aligned}
\end{equation}
On the other hand, it follows (\ref{eqn:C_aph-int-H_aphH_bet}) and (\ref{eqn:hermite-polynomial-definition}) that
\begin{equation}
  \begin{aligned}
    \int\mathscr{H}_\mathbf{\alpha}(\mathbf{p})\exp\left(-\frac{1}{2}|\mathbf{p}-i(\mathbf{x}-\mathbf{x}_0)|^2\right)d\mathbf{p}
    =(2\pi)^{D/2}(i(\mathbf{x}-\mathbf{x}_0))^\mathbf{\alpha}.
  \end{aligned}
\end{equation}
Therefore with aid of (\ref{eqn:psi-inverse-wigner-transform}), (\ref{eqn:wigner-hermite-expansion}) and (\ref{eqn:hermite-expansion-basis-function}), we could recover the wave function from coefficient functions
\begin{equation}
  \begin{aligned}
    \psi(\mathbf{x})
    &=\frac{\exp(-|\mathbf{x}-\mathbf{x}_0|^2/2)}{\psi^*(\mathbf{x}_0)}\sum\limits_\mathbf{\alpha}f_\mathbf{\alpha}\left(\frac{\mathbf{x}+\mathbf{x}_0}{2}\right)(i(\mathbf{x}-\mathbf{x}_0))^\mathbf{\alpha}.
  \end{aligned}
\end{equation}
Thus wave function could also be reconstructed from coefficient functions. In addition, (\ref{eqn:coefficient-function-h}) and (\ref{eqn:coefficient-function-f-by-h}) provide a method to construct coefficient functions by wave function both theoretically and numerically.
\section{Ground state density of Hooke's atom}
Introducing change of variables $X=(x_1+x_2)/2$, $x=x_1-x_2$ we obtain
\begin{equation}
  \left(-\frac{1}{4}\frac{\partial^2}{\partial X^2}-\frac{\partial^2}{\partial x^2}+\frac{1}{4}X^2+\frac{1}{16}x^2+\frac{1}{|x|}\right)\Psi(x_1,x_2)=E\Psi(x_1,x_2)
\end{equation}
Taking $\Psi(x_1,x_2)=\Psi_X(X)\Psi_x(x)$, separating the variables we get
\begin{equation}\label{eqn:Psi-Xx-separating-variables}
  \left(-\frac{1}{4}\frac{\partial^2}{\partial X^2}+\frac{1}{4}X^2\right)\Psi_X=E_X\Psi_X,
  ~~~\text{and}~~~
  \left(-\frac{\partial^2}{\partial x^2}+\frac{1}{16}x^2+\frac{1}{|x|}\right)\Psi_x=E_x\Psi_x.
\end{equation}
The first equation is
$$\left(-\frac{1}{2}\frac{\partial^2}{\partial X^2}+\frac{1}{2}X^2\right)\Psi_X=2E_X\Psi_X,$$
which is the one-dimensional Schr\"odinger equation with harmonic oscillator potential. So the ground state wave function is
\begin{equation}\label{eqn:Psi-X-ground-state}
  \Psi_X=\pi^{-1/4}e^{-X^2/2},~~~\text{with energy}~~~2E_X=\frac{1}{2},E_X=\frac{1}{4}.
\end{equation}
Now we solve the equation for $\Psi_x$. At large $x$, the term $x^2/16$ dominates, which implies the approximate solution
$$\Psi_x(x)\approx Ae^{-x^2/8}+Be^{x^2/8}.$$
To find a normalizable solution, we take $B=0$, this suggests that $\Psi_x(x)=h(x)e^{-x^2/8}$. Then the Schr\"odinger equation becomes
\begin{equation}\label{eqn:h-two-partical-example}
  -\frac{d^2 h}{dx^2}+\frac{x}{2}\frac{dh}{dx}+\left(\frac{1}{4}+\frac{1}{|x|}-E_x\right)h=0.
\end{equation}
We look for solutions to (\ref{eqn:h-two-partical-example}) in the form of power series in $x$: $h(x)=\sum_{j=0}^\infty a_jx^j$. Plugging it into (\ref{eqn:h-two-partical-example}) we find
\begin{equation}
  \begin{aligned}
    -\sum\limits_{j=0}^\infty(j+1)(j+2)a_{j+2}x^j+\sum\limits_{j=1}^\infty\frac{1}{2}ja_jx^j+\sum\limits_{j=0}^\infty\left(\frac{1}{4}-E_x\right)a_jx^j+\sum\limits_{j=0}^\infty\sgn(x)a_{j+1}x^j+\frac{a_0}{|x|}=0.
  \end{aligned}
\end{equation}
Thus $a_0=0$, matching the coefficients we get
\begin{equation}\label{eqn:aj-relation-j1-2}
  -2a_2+\sgn(x)a_1=0
\end{equation}
and
\begin{equation}\label{eqn:aj-recursion-formula}
  a_{j+2}=\frac{\sgn(x)a_{j+1}+\left(j/2+1/4-E_x\right)a_j}{(j+1)(j+2)}
  ~~~\text{for}~~~j\geqslant1.
\end{equation}
For large $j=2k$, the recursion formula approximately becomes
$$a_{j+2}\approx\frac{1}{2j}a_j,~~~\text{with the approximate solution}~~~a_j\approx\frac{C(1/4)^{j/2}}{(j/2)!}=\frac{C(1/4)^k}{k!}$$
for some constant $C$. Similarly we find the approximate solution for odd $j=2k+1$
$$a_{j+2}\approx\frac{1}{2(j-1)}a_j~~~\Rightarrow~~~a_j\approx\frac{D(1/4)^{(j-1)/2}}{((j-1)/2)!}=\frac{D(1/4)^k}{k!}$$
for some constant $D$. These results yield the asymptotic behavior at large $x$:
$$h(x)=\sum\limits_k(a_{2k}x^{2k}+a_{2k+1}x^{2k+1})
\approx\sum\limits_k(C+Dx)\frac{(x^2/4)^k}{k!}
\approx(C+Dx)e^{x^2/4}.$$
But such behavior leads to $\Psi_x\approx (C+Dx)e^{x^2/8}$ at large $x$, which is not normalizable. Therefore the power series must terminate for some $j$. 
Suppose that $\{a_n\}$ terminates at $n$, i.e., $a_{n+1}=a_{n+2}=0$. Taking $j=n$ in (\ref{eqn:aj-recursion-formula}) we find
\begin{equation}
	\left(\frac{1}{2}n+\frac{1}{4}-E_x\right)a_j=0~~~\Rightarrow~~~E_x=\frac{2n+1}{4}.
\end{equation}
Note that when $n=1$, (\ref{eqn:aj-relation-j1-2}) implies trivial solution. For ground state we have $n=2$, the corresponding wave function and energy are
\begin{equation}\label{eqn:Psi-x-ground-state}
  \Psi_x=a_1\left(\frac{1}{2}\sgn(x)x^2+x\right)e^{-x^2/8}
  =a_1\left(\frac{1}{2}|x|+1\right)xe^{-x^2/8},
  ~~~\text{with energy}~~~
  E_x=\frac{5}{4}.
\end{equation}
Combining (\ref{eqn:Psi-X-ground-state}) and (\ref{eqn:Psi-x-ground-state}) we obtain the ground state wave function of Hooke's atom
\begin{equation}
  \Psi(x_1,x_2)=C\left(\frac{1}{2}|x|+1\right)xe^{-X^2/2}e^{-x^2/8}
  =C\left(\frac{1}{2}|x_1-x_2|+1\right)(x_1-x_2)e^{-(x_1^2+x_2^2)/4}
\end{equation}
with energy $E=E_X+E_x=3/2$, where $C=(16\sqrt{\pi}+10\pi)^{-1/2}$ is the normalization constant and $x,X$ are defined as above. Therefore the ground state density is given by
\begin{equation}
  \begin{aligned}
    &\rho(x)
    =2\int|\Psi(x,x_2)|^2dx_2
    =2C^2e^{-\frac{x^2}{2}}\bigg{(}
    (4+2x^2)e^{-\frac{x^2}{2}}
    +\sqrt{2\pi}\left(\frac{7}{4}+\frac{5}{2}x^2+\frac{1}{4}x^4\right)
    +\sqrt{2\pi}\erf\left(\frac{x}{\sqrt{2}}\right)(3x+x^3)
    \bigg{)},
  \end{aligned}
\end{equation}
where $\erf(x)$ is the error function defined as
\begin{equation}
  \erf(x)=\frac{2}{\sqrt{\pi}}\int_0^xe^{-y^2}dy.
\end{equation}

\bibliographystyle{plain}
\bibliography{ref}

\end{document}